\documentclass[11pt]{article}

\usepackage[margin=1.2in]{geometry}
\usepackage{amsmath,amssymb,amsthm}
\usepackage{enumitem}
\usepackage[colorlinks=true,linkcolor=blue,citecolor=blue,urlcolor=blue]{hyperref}

\theoremstyle{plain}
\newtheorem{theorem}{Theorem}[section]
\newtheorem{lemma}[theorem]{Lemma}

\newtheorem{corollary}[theorem]{Corollary}
\theoremstyle{definition}
\newtheorem{definition}[theorem]{Definition}
\newtheorem{remark}[theorem]{Remark}
\newtheorem{example}[theorem]{Example}
\newtheorem{question}[theorem]{Question}

\newcommand{\X}{\mathcal{X}}
\newcommand{\Y}{\mathcal{Y}}
\newcommand{\A}{\mathcal{A}}
\newcommand{\B}{\mathcal{B}}
\newcommand{\add}{\mathsf{add}}
\newcommand{\del}{\mathsf{del}}
\newcommand{\Data}{\mathrm{Data}}
\newcommand{\Live}{\mathrm{Live}}
\newcommand{\eps}{\varepsilon}

\newcommand{\remove}[1]{}

\usepackage{cancel}
\usepackage{xcolor}

\title{Machine Unlearning as Private Retroactive Algorithms}
\author{
Haim Kaplan\textsuperscript{*,$\dagger$} 
\and 
Refael Kohen\textsuperscript{*} 
\and 
Yishay Mansour\textsuperscript{*,$\dagger$} 
\and 
Kobbi Nissim\textsuperscript{$\ddagger$,$\dagger$} 
\and 
Uri Stemmer\textsuperscript{*,$\dagger$}
}
\date{}

\begin{document}
\maketitle

\begingroup\renewcommand{\thefootnote}{} \footnotetext{ \hspace{-7px} \textsuperscript{*}Tel Aviv University \quad \textsuperscript{$\dagger$}Google Research \quad \textsuperscript{$\ddagger$}Georgetown University}\endgroup

\begin{abstract}
Machine unlearning typically aims to emulate retraining from scratch: upon a deletion request, the unlearning algorithm should produce an outcome that would have been obtained had the deleted point never been included. 
Recent work has shown that this emulation requirement carries no meaningful privacy semantics against an adversary who observes a sequence of releases. Machine unlearning is thus not a privacy question per se, but rather a data maintenance question, which is precisely the subject of \emph{retroactive algorithms}. 
These are algorithms supporting modifications of past operations, guaranteeing that all subsequent answers reflect the revised history as if it had always been in force. 

We put forward a definition of \emph{private retroactive algorithms}, combining the retroactivity requirement with differential privacy under continual observation. We present constructions achieving both privacy and retroactivity at no asymptotic cost over privacy alone for linear statistics, clustering, and histograms, alongside impossibility results.
\end{abstract}

\section{Introduction}\label{sec:intro}

Machine unlearning addresses the challenge of removing the influence of specific training data from an already trained model. Cao and Yang \cite{CY15} defined a removal as successful if the resulting model's distribution is identical to that of a model retrained from scratch without the removed data (so-called \emph{perfect retraining}). This ensures that the updated model harbors no statistical traces of the removed data, simulating a reality in which it was never part of the training set to begin with.

The primary motivation for machine unlearning stems from modern data privacy laws, such as the GDPR's ``right to be forgotten''. However, recent works \cite{ChenZWBHZ21, CarliniJZPTT22, ChenLZH25, CKNS26, FuNZHW26} have shown that the privacy semantics of this simulation requirement are unclear. First, deleting points may potentially reveal {\em more} information about them via a differencing attack. Second, and perhaps more surprisingly, removing points can inadvertently degrade the privacy of the {\em remaining} data. Specifically, Cohen et al.~\cite{CKNS26} showed that there is a task solvable with differential privacy without deletions (so the task itself is ``benign'' from a privacy standpoint), yet executing just a few deletion requests unavoidably allows an adversary to reconstruct almost all of the remaining data. 

The reason is definitional: perfect retraining constrains what the outputs \emph{are}, whereas a privacy definition must constrain what the outputs \emph{reveal}. 
We therefore propose to build the area on two foundations: a \emph{consistency} guarantee (the answers coherently track the surviving data) and a \emph{leakage} guarantee (the release sequence reveals little about any individual). %

Stripped of its privacy motivation and reduced to its consistency requirement, machine unlearning is, fundamentally, a maintenance question. %
This question predates unlearning, and was studied in the context of dynamic algorithms. Specifically, when deleting a data point requires removing an operation from the middle of an order-sensitive history, the problem falls into the realm of {\em retroactive} data structures (in the sense of Demaine et al.~\cite{DIL07}): modify the past, query the present.

\subsection{Our contributions}\label{subsec:contributions}
Our first contribution is to formalize the notion of \emph{private retroactive algorithms}, uniting the requirements of retroactive data structures \cite{DIL07} and differential privacy \cite{DMNS06}. Crucially, retroactivity (and by extension, machine unlearning) operates in an inherently continual setting. Even a single deletion exposes the system's state both before and after the update; therefore, a meaningful privacy guarantee must bound the cumulative leakage across this evolving process, as in the continual observation model of \cite{DNPR10}. 
Informally, a \emph{private retroactive} algorithm $\A$ is a dynamic algorithm that processes a sequence of data insertions and deletions while satisfying the following:
\begin{itemize}
    \item \textbf{Privacy.} Changing or removing a single update at any point in the execution has almost no effect on the joint distribution of the entire output sequence, as in \cite{DNPR10}.
    \item \textbf{Retroactivity.} For any two input sequences that induce the same surviving data from some timestep $i$ onward (even if they differ by many updates in the prefix), the joint distributions of their answer suffixes from time $i$ coincide, as in \cite{DIL07}. We also consider a weaker version of \emph{marginal approximate retroactivity} in which we only require
    the distribution of the answers for each time step $j\ge i$ to approximately coincide. 
\end{itemize}

These two requirements are of very different natures. Ignoring computational costs, retroactivity on its own is always achievable, with no error: simply recompute, releasing at every time step a fresh answer drawn as a function of the current data alone. Privacy, on the other hand, is highly restrictive even in isolation: it necessitates approximation errors; for instance, privately maintaining a counter requires error $\Omega(\log^{3/2} T)$ \cite{DNPR10,HenzingerUU23,CohenNSS24,BairaktariLarsen26}. This asymmetry suggests measuring the cost of the combination against the cost of privacy alone, and we pose the following question.

\begin{question}\label{q:free}
Is retroactivity free? That is, suppose that a task can be solved with error $\Delta$ by a continually private algorithm. Can it always be solved with error $O(\Delta)$, or comparable, by an algorithm that is in addition retroactive? Or is there a task that admits a private algorithm with small error, but for which every private \emph{and} retroactive algorithm must incur a significantly larger error?
\end{question}

In this work, we demonstrate that there are many interesting cases where exact retroactivity is indeed free. Furthermore, we establish that weaker forms of retroactivity are often attainable generically. Conversely, retroactivity is not universally free: we prove that in certain settings, it strictly forbids instance-adaptive error. We next provide an informal survey of these results.

\paragraph{Exact retroactivity for oblivious-noise mechanisms.} We begin with the class of functionalities where retroactivity is entirely free: real-valued linear statistics over the surviving data. As \cite{BDKT12} showed, DP algorithms for many such functionalities can w.l.o.g.\ be transformed into {\em oblivious-noise} algorithms, i.e., with a noise distribution that is data-independent. This independence is perfectly suited for retroactivity, as it guarantees the noise itself carries no hidden traces of the erased history. Consequently, we demonstrate that the classical tree mechanism of Dwork et al.\ \cite{DNPR10} naturally satisfies exact retroactivity with no asymptotic cost over privacy. As an application, we observe that the recent continually-DP algorithms for $k$-means and $k$-median clustering of \cite{DHS23} are exactly retroactive, essentially as is. This shows that our definition is not only attainable for free in certain settings, but is in fact already satisfied by a large family of existing algorithms in the literature.

\begin{theorem}[Informal version of Theorem~\ref{thm:linear}]
Any one-shot differentially private mechanism for linear queries that relies on data-independent (oblivious) noise can be transformed into a dynamic algorithm that is exactly retroactive and differentially private under continual observation. This transformation incurs no asymptotic overhead in error compared to the standard non-retroactive transformation from the one-shot to the continual DP setting.
\end{theorem}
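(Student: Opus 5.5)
The plan is to instantiate the binary tree mechanism of \cite{DNPR10}, but with the one-shot oblivious-noise mechanism $M(x)=f(x)+Z$ in place of Laplace noise. The input sequence of updates $u_1,\dots,u_T$ is viewed as a stream of signed contributions: an insertion of $x$ at time $t$ contributes $+f(x)$, and a deletion of a previously inserted $x$ contributes $-f(x)$. Since $f$ is linear, the answer at time $j$ is $f(\Live_j)=\sum_{t\le j} c_t$, where $c_t$ is the signed contribution of step $t$.

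\textbf{Construction.} Build the dyadic tree over $[T]$. To each node $v$ with interval $I_v$, attach a noisy partial sum $S_v=\sum_{t\in I_v} c_t+Z_v$. The $Z_v$ are i.i.d.\ draws from the oblivious noise distribution of $M$, calibrated for sensitivity $O(\log T)$ by the usual composition over levels. At time $j$, release $\sum_{v\in P(j)} S_v$, where $P(j)$ is the canonical decomposition of $[1,j]$ into at most $\log T$ dyadic nodes.

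\textbf{Privacy.} This is the standard tree-mechanism argument. Changing or removing one update $u_t$ alters the contributions $c_t$, and also $c_{t'}$ for a matching deletion at some later $t'$. Each such step lies in at most $\log T+1$ nodes, so the vector $(\sum_{I_v}c_t)_v$ changes in a bounded number of coordinates. The privacy guarantee of $M$ under that neighbouring relation (or parallel composition across levels) then gives continual DP, because all releases are post-processing of $(S_v)_v$. The error is that of summing $\log T$ independent noises, which matches the standard non-retroactive transformation. This shows there is no asymptotic overhead.

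\textbf{Retroactivity.} This step is the crux, and I expect the main obstacle to be formulating it correctly. Fix $i$ and two input sequences $\sigma,\sigma'$ with $\Live_j(\sigma)=\Live_j(\sigma')$ for all $j\ge i$. For every $j\ge i$, the released value equals
\[
  f(\Live_j)+\sum_{v\in P(j)} Z_v ,
\]
because the prefix sums telescope exactly. Crucially, the noise term depends only on $j$ and on the fixed tree structure, and not on the data. Hence the whole suffix $(a_j)_{j\ge i}$ is a deterministic function $\bigl(f(\Live_j)\bigr)_{j\ge i}$ plus a data-independent random vector $\bigl(\sum_{v\in P(j)}Z_v\bigr)_{j\ge i}$ with the same joint law under both inputs. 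The joint distributions therefore coincide exactly.

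The subtle points to check are these. First, deletions must be encoded as negated contributions of the original point, not as a recomputation, so that linearity gives telescoping. Second, if updates are retroactively edited, the mechanism must be phrased on the final (revised) sequence, so that ``surviving data'' is well defined. Finally, the noise must be drawn independently of the history, which is exactly where obliviousness is used; a data-dependent noise mechanism would break this step.
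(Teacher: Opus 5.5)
Your proposal is correct and follows the paper's proof: your difference-stream tree releases exactly $Fh(\Data_j)+\sum_{v\in D(j)}N_v$, retroactivity holds because this noise is indexed by the round alone, and for privacy at most $2m$ node sums shift by a single column $\pm F_{x}$, giving $(2m\eps_0,2m\delta_0)$ by composition, exactly as in the paper. Two small fixes: the composition is sequential across the $m$ levels, since the affected step lies in one node per level, not parallel; and you should define $c_t=Fh(\Data_t)-Fh(\Data_{t-1})$ explicitly, so that no-op deletions contribute zero, which the telescoping identity needs.
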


\paragraph{Stability-based histograms with non-oblivious noise.} We next tackle the task of {\em sparse histograms} over a huge universe, for which oblivious-noise mechanisms do not exist. The standard differential privacy remedy (adding noise only to non-empty bins and thresholding \cite{KKMN09}) is highly data-dependent, seemingly violating the retroactivity requirement. We overcome this by dynamically maintaining a noisy histogram with oblivious noise, and crucially delaying the non-linear thresholding step to the release moment. We show that this achieves exact retroactivity while maintaining an $\ell_\infty$ error independent of the universe size.

\begin{theorem}[Informal version of Theorem~\ref{thm:histogram}]
For any data universe $\mathcal{X}$, there exists an exactly retroactive and continually differentially private algorithm for maintaining histograms. At every time step, the algorithm outputs a sparse histogram with an $\ell_\infty$ error that scales only polylogarithmically with the time horizon $T$, and is completely independent of the universe size $|\mathcal{X}|$.
\end{theorem}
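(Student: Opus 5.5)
The construction maintains, for every bin $x\in\X$, a dyadic-tree (binary) counter over the time horizon $[T]$. The noise is assigned obliviously and lazily: each dyadic node $I$ of the tree for bin $x$ carries an independent Laplace variable $Z_{x,I}\sim\mathrm{Lap}(\log T/\eps)$. Conceptually this is a random function of $(x,I)$ fixed in advance, materialized only when first needed, e.g., via a random oracle or lazy sampling. The count for $x$ at time $t$ is defined on the \emph{current} surviving data, not the history. Let $\Live_t$ be the multiset of data alive at time $t$ after all retroactive modifications. Let $c_x(I)$ be the net contribution to bin $x$ of updates whose (revised) timestamps lie in $I$. Then $\hat c_x(t)=\sum_{I\in\mathrm{dec}(t)}(c_x(I)+Z_{x,I})$, where $\mathrm{dec}(t)$ is the dyadic decomposition of $[1,t]$ into $O(\log T)$ nodes. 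At release time we output the sparse histogram $\{(x,\hat c_x(t)) : x\in\mathrm{supp}(\Live_t),\ \hat c_x(t)\ge \tau\}$, with threshold $\tau=\Theta(\log^{1.5}T\cdot\log(T/\delta)/\eps)$. Only bins with nonzero true count are ever queried, so the noise is oblivious even though the released set is sparse. All thresholding happens at the release moment.

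\textbf{Retroactivity.} Fix two input sequences inducing the same surviving data from step $i$ on. For every $t\ge i$, every bin, and every dyadic node in $\mathrm{dec}(t)$, the true counts $c_x(I)$ summed along $\mathrm{dec}(t)$ equal the true count of $x$ in $\Live_t$, and the supports agree. The output at $t$ is therefore a fixed deterministic function $F_t(\Live_t,Z)$ of the live data and the single shared noise field $Z$. Since $Z$ has the same law under both sequences and is independent of the data, the joint distribution of $(F_t(\Live_t,Z))_{t\ge i}$ coincides. This gives exact retroactivity. The one subtle point is that $\hat c_x(t)$ should depend only on $\Live_t$; I would define it directly as (true count of $x$ in $\Live_t$) $+\sum_{I\in\mathrm{dec}(t)}Z_{x,I}$, which is equivalent and makes the claim immediate.

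\textbf{Privacy.} Changing one update affects a single bin $x$, or two bins if it is a replacement. Its effect on the $Z$-free quantities is sensitivity-one along the $O(\log T)$ nodes of one root-to-leaf path, exactly as in the tree mechanism (Theorem~\ref{thm:linear}). On bins present in both executions this is $\eps$-DP by the Laplace analysis. The obstacle is the support difference: bin $x$ may be in $\mathrm{supp}(\Live_t)$ in one world and absent in the other, which breaks the plain Laplace argument. I would handle this as in the stability-based histogram \cite{KKMN09}. In the world where $x$ is absent, $x$ is never output. In the other world, $x$ has true count at most $1$ whenever the two worlds' supports differ, since the two worlds differ by one update. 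The probability that $1+\sum_{I}Z_{x,I}\ge\tau$ at some time $t\le T$ is at most $\delta$: by a union bound over $T$ times and the concentration of a sum of $O(\log T)$ Laplace variables, which is $O(\sqrt{\log T}\log T\log(T/\delta)/\eps)$ w.h.p. Conditioning on this bad event not occurring, the output streams agree on $x$ except through the common-support analysis, giving $(\eps,\delta)$-DP under continual observation. This coupling across the whole stream, together with the fact that the support set can flip at several times, is the step I expect to need the most care.

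\textbf{Accuracy.} By the same concentration and a union bound over times and over at most $T$ bins ever in the support, every released or suppressed live bin has error at most $\tau+O(\log^{1.5}T\log(T/\delta)/\eps)$. Bins outside the support are released as $0$ with zero error. The $\ell_\infty$ error is therefore $\mathrm{polylog}(T)\cdot\log(1/\delta)/\eps$, independent of $|\X|$.
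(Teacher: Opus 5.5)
Your proposal is correct and follows essentially the paper's proof. The construction is the same: lazily instantiated, data-independent Laplace noise per (bin, dyadic node) added to the current count, with thresholding only at release and only over bins of nonzero true count. Retroactivity holds because each release is a fixed function of $\Data_\ell(S)$ and a noise field whose law does not depend on the data. Privacy comes from a noise-shift coupling plus a union bound over rounds on the event that a bin with true count $1$, absent in the other world, crosses $\tau$.

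The one slip is your claim that the difference lives on a single root-to-leaf path. The affected bin's counts differ on a contiguous window that may end before $T$, for instance an addition later cancelled by a deletion that is a no-op in the other world. You must therefore also shift, in the opposite direction, the noise on the path of the first round after the window. This only doubles $\eps$, exactly as in the paper's footnote.
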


\paragraph{Generic compilers.} Beyond specific functionalities, we present two generic compilers turning DP algorithms into retroactive ones via black box transformations. The first transformation achieves marginal approximate retroactivity while keeping the error essentially the same as that of the base DP algorithm. The second transformation achieves exact retroactivity at the price of specializing to sliding-window functionalities and increasing the error.

\begin{theorem}[Informal version of Theorems~\ref{thm:compiler} and~\ref{thm:sampling}]
Differentially private algorithms can be generically compiled to achieve retroactivity in two ways:
\begin{itemize}
    \item \textbf{Marginal approximate retroactivity:} Any continually differentially private algorithm with real valued answers can be transformed to satisfy marginal approximate retroactivity. The transformation increases the error by an additive term proportional to the algorithm's baseline accuracy.
    \item \textbf{Exact retroactivity for sliding windows:} Any one-shot differentially private algorithm can be compiled into a continually private and exactly retroactive algorithm for sliding-window functionalities, leveraging random sampling of the active window. The privacy cost of the resulting algorithm depends on the size of the window (a large window increases the privacy cost via composition) and the size of the sample (a small sample from a large window can amplify privacy).
\end{itemize}
\end{theorem}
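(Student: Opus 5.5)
Two separate constructions are needed, one per bullet of the statement. I would prove them independently, since they rest on different mechanisms.

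\textbf{Marginal approximate retroactivity (Theorem~\ref{thm:compiler}).} Start from a continually DP algorithm $\B$ with real-valued answers and error $\alpha$ with probability $1-\beta$. The idea is to make each released answer depend on the history only through a quantity that any two histories with the same surviving data at time $j$ share up to $\alpha$, and then erase the residual dependence by randomized rounding.

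\begin{enumerate}[label=(\roman*)]
\item Run $\B$ on the update stream to obtain at each step $j$ an answer $b_j$. With probability $1-\beta$, $b_j$ is within $\alpha$ of the true value $f(\Live_j)$.
\item Release the randomized rounding $\lfloor (b_j+u)/w\rfloor\cdot w - u$ with grid width $w=\Theta(\alpha/\gamma)$ and a single fresh uniform shift $u\in[0,w)$.
\item Consider two histories inducing the same $\Live_j$. Their $b_j$ values differ by at most $2\alpha$ on the good event. By the standard randomized-rounding fact, the rounded outputs then agree except with probability $2\alpha/w=O(\gamma)$. This yields marginal $(\gamma+2\beta)$-closeness in total variation at every time $j\ge i$.
\item Privacy is immediate by post-processing. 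The error grows additively by $w=O(\alpha/\gamma)$, i.e.\ by a term proportional to the baseline accuracy.
\end{enumerate}

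The main subtlety is that only marginals can be matched, because coupling shifts across different times leaks joint history. This is exactly why the guarantee is marginal rather than joint.

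\textbf{Exact retroactivity for sliding windows (Theorem~\ref{thm:sampling}).} Let $W$ be the window size. At each step $j$, draw a fresh random subset of size $m$ from the active window. The sampling is done by selecting random positions within the window contents, which are a function of $\Live_j$ alone. Run the one-shot $(\eps_0,\delta_0)$-DP algorithm on the sample with fresh coins and release the result.

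\begin{enumerate}[label=(\roman*)]
\item \emph{Exact retroactivity.} Each answer is a function of $\Live_j$ and independent randomness. Hence the joint distribution of the suffix of answers depends only on the suffix of surviving data.
\item \emph{Privacy.} A single update affects at most the $W$ windows containing it. Within each such window, privacy amplification by subsampling gives a per-step guarantee $\eps'\approx (m/W)\eps_0$. Advanced composition over those $W$ steps then gives total privacy $\approx \eps'\sqrt{W\log(1/\delta)}$.
\end{enumerate}

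The main obstacle is step (ii). It requires checking that a deletion or change of one update alters each affected window by a single element, so that the neighboring relation is the one under which amplification applies. If needed, I would handle the changed-versus-removed cases by replace-one subsampling bounds.
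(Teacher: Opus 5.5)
\textbf{First bullet: correct, by a different route.} The paper (Theorem~\ref{thm:compiler}) adds fresh Gaussian noise with $\sigma=\frac{2E}{\alpha}\sqrt{2\ln(1.25/\beta_K)}$. It obtains $(\alpha,\beta_K+(2+e^{\alpha})\beta_0)$-marginal retroactivity, with an extra error $E_K$ that holds except with probability $\beta_K$. Your random-shift rounding instead gives $(0,\gamma+2\beta)$-marginal retroactivity, i.e.\ pure total-variation closeness, with a deterministic extra error below $w$. Your rounding map is an admissible $K$ in Remark~\ref{rem:compiler-general}, with zero multiplicative slack. It is more elementary and its error increase is worst case. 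The price is that it pays for the additive slack linearly ($w\propto 1/\gamma$). The Gaussian pays only $\sqrt{\ln(1/\beta_K)}$ for a negligible additive slack, at the cost of a factor $e^{\alpha}$, and it degrades only like $\sqrt{B}$ over blocks of $B$ answers.

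Your closing remark that ``only marginals can be matched'' is too strong. Part~3 of Theorem~\ref{thm:compiler} gives closeness for every block of $B$ consecutive answers, and your rounding also gives block closeness, degrading linearly in $B$ by a union bound. What fails is only the arbitrarily long suffix. (Minor: in this part the true value is $f(\Data_j)$, not $f(\Live_j)$.)

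\textbf{Second bullet: construction and retroactivity match the paper, but privacy has a genuine gap.} Your per-step bound $\eps'\approx(m/W)\eps_0$ conflates two different quantities:
\begin{itemize}
\item the window length $W$, a number of rounds, which only caps how many rounds differ and hence how many steps are composed;
\item the occupancy $n_\ell=|\Live_\ell(S)|$, which is what governs subsampling.
\end{itemize}
In this model rounds may be empty or carry many updates, and deletions thin the window, so $n_\ell$ bears no relation to $W$. When $n_\ell\le m$ the sample is the whole window and the extra copy is always included, so there is no amplification at all (e.g.\ near the start of a sparse stream). As written, you therefore have no privacy guarantee for arbitrary neighboring pairs.

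The paper closes this in three steps.
\begin{itemize}
\item \emph{Unconditional bound} (part~2 of Theorem~\ref{thm:sampling}). The differing update changes the multiplicity of one stamped point by one, on a contiguous interval of rounds. Hence at most $W$ live windows differ, each by one copy. Couple the samples: draw from $\Live_\ell(S^-)$, then, with probability equal to the sample size over $|\Live_\ell(S^+)|$, swap the extra copy in (or add it if the whole window is sampled). The samples then differ by at most one substitution. For an add/remove-private $M$ this costs group privacy at distance two, $(2\eps_0,2e^{\eps_0}\delta_0)$ per round. Composing over $W$ rounds already yields continual privacy.
\item \emph{Amplification under an occupancy promise} $n_\ell\ge n$. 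Here $q=k/(n+1)$ and the per-round parameter is $\ln(1+q(e^{2\eps_0}-1))$; note $2\eps_0$, not $\eps_0$.
\item \emph{Removing the promise} (Lemma~\ref{lem:gate}). The paper privately counts live points with the tree mechanism of Theorem~\ref{thm:linear} and releases $\bot$ when the noisy count is below $2n$. This stays exactly retroactive, because the count is a linear window statistic and the gating is per-round post-processing (Lemma~\ref{lem:closure}).
\end{itemize}
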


\paragraph{The price of retroactivity.} Finally, we illustrate a negative side of Question \ref{q:free} by showing that retroactivity is not universally free. Specifically, prior work on the Count Distinct problem \cite{JKRSS23} showed that continually-DP algorithms can achieve instance-dependent error, with lower error on ``easy'' instances. We prove that retroactivity strictly forbids such instance-adaptive error guarantees, and that any private and retroactive algorithm must incur worst-case error even on the easiest instances.

\begin{theorem}[Informal version of Theorem~\ref{thm:countdistinct}]
For the Count Distinct problem, any algorithm that is both continually differentially private and retroactive must incur a worst-case polynomial error even on the easiest, insertion-only sequences. Consequently, retroactivity strictly precludes the instance-adaptive, polylogarithmic error guarantees that are achievable by algorithms satisfying privacy alone.
\end{theorem}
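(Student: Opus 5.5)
The plan is to exploit retroactivity to transport a hard instance into the suffix of an easy one. Count Distinct is the task of outputting, at each time $t$, the number of distinct elements in the surviving data. Exact retroactivity says: if two input sequences induce the same surviving data from time $i$ on, then their answer suffixes from time $i$ have identical joint distributions. Suppose $\sigma_{\mathrm{hard}}$ is a worst-case sequence with many insertions and deletions. I will build a sequence $\sigma'$ that agrees with $\sigma_{\mathrm{hard}}$ in a prefix but whose surviving data from some time $i$ onward is an insertion-only (``easy'') stream. Retroactively rewriting the prefix of $\sigma'$ then makes the algorithm behave, on the easy suffix, exactly as it must on that easy instance. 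Concretely, the target statement is: for any $\eps$-DP retroactive algorithm, there exist insertion-only streams on which the error is $\Omega(T^{c})$ for some constant $c>0$. Given the JKRSS23 lower bounds, I expect this to be $\Omega(\min(\sqrt{w},T^{1/3}))$-type or polynomial in $T$.

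\textbf{Step 1 (hard family with deletions).} Take the known lower bound for continual Count Distinct under item-level privacy with deletions (JKRSS23): a reconstruction-style family in which each element flickers, being inserted and deleted repeatedly. Any continually DP algorithm on it must have error $\alpha=\Omega(T^{c})$ at some time, via a packing or reconstruction argument over $2^{k}$ secret bit-vectors.

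\textbf{Step 2 (retroactive collapse).} For each hard sequence, define a modified sequence in which every deletion is retroactively removed, together with the insertion it cancels. The surviving data at each later time is then the set of elements present at that time. I will arrange the embedding so that the surviving multiset process from a time $i$ onward is achievable by an insertion-only stream. The natural approach is to encode the secrets in \emph{which} elements are inserted late, and then apply retroactive deletions of early insertions so that the suffix looks insertion-only. By exact retroactivity, the suffix outputs equal in distribution those on the genuine insertion-only stream $\tau_b$ indexed by the secret $b$.

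\textbf{Step 3 (packing).} The streams $\tau_b$ pairwise differ in few updates, so DP under continual observation applies to them directly. If the algorithm had error $\ll \alpha$ on every insertion-only $\tau_b$, reconstruction would recover $b$, contradicting DP by the standard $2^{k}$ packing bound. In particular the easiest instances are among the $\tau_b$, because all insertion-only streams with the same counts look alike to instance-adaptive bounds. Contrasting this with the polylogarithmic instance-adaptive upper bound of JKRSS23 for insertion-only streams gives the separation.

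The main obstacle is Step 2. Retroactivity only equates distributions when the surviving data coincide \emph{from time $i$ onward}, whereas the reconstruction needs information revealed across many times. I will try to make the hard structure live entirely in one block of the stream. The idea is to use the prefix-rewriting freedom so that each secret bit is witnessed as a difference between counts at two consecutive suffix times. The difficulty of Step 2 then reduces to showing that an insertion-only stream with $k$ free bits still forces $\Omega(T^{c})$ error once all its prefixes are forced to be answered consistently.
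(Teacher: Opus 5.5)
\textbf{There is a genuine gap, and it lies in the direction of your reduction.}

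You try to carry the hard instance over into insertion-only streams $\tau_b$ and then derive the privacy contradiction among the $\tau_b$ (Step 3, and the claim you reduce to at the end of Step 2). No argument that uses only privacy and accuracy on insertion-only streams can give a polynomial bound. Consider the algorithm that ignores all deletions and runs the binary tree mechanism on the per-round number of first arrivals of values. It is private in the sense of Definition~\ref{def:privacy}: an extra deletion changes nothing, and an extra addition $(\add,x)$ at round $t$ changes that count stream in at most two positions, namely $t$ and the former first arrival of $x$. Its error on every insertion-only stream is polylogarithmic. Retroactivity imposes nothing inside that class either, since two insertion-only sequences that unite at some round are identical ($\Data_T$ records all their stamped additions). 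So the statement you reduce to, that an insertion-only stream with $k$ free bits forces $\Omega(T^c)$ error, is false. The contradiction must come from privacy on sequences with deletions.

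Nor can Step 2 supply a joint transport. The reconstruction family must make each query set $Q_j$ appear and then disappear, so no insertion-only stream unites with it over the rounds that carry the information. Separately, Step 1 must use the event-level $T^{1/4}$ bound. The item-level bound says nothing about algorithms that are private only at the granularity of a single operation.

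\textbf{The missing idea is to transport accuracy rather than the instance, and only one round at a time.} Suppose $\Data_\ell(S)=\Data_\ell(S')$, and form the hybrid that takes the first $\ell$ updates of $S'$ and the rest of $S$. It agrees with $S'$ up to round $\ell$, so $a_\ell$ has the same law by causality. It unites with $S$ at round $\ell$, so $a_\ell$ has the same law by retroactivity. Hence the law of $a_\ell$ depends only on $\ell$ and $\Data_\ell(S)$ (Lemma~\ref{lem:amnesia}).

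Every state is realized at round $\ell$ by some insertion-only sequence. So the guarantee $f(1)$ on insertion-only streams becomes per-round accuracy $f(1)$ on every sequence. In particular it holds on the deletion-heavy family $S(y)$: insert the secret set $Y$, then for each $j$ insert $Q_j$ and delete it again. Privacy is then applied to the transcript of $\A$ on $S(y)$ itself, where flipping one bit of $y$ toggles a single addition. Each $\langle s^{(j)},y\rangle$ is estimated by $a_n+2|Q_j|-2a_{m_j}$, with error at most $3f(1)$ except with probability $2\beta^*$.

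This also dissolves the obstacle you flagged: you never need one insertion-only stream uniting with the hard one over an interval, only one proxy per round. The price is that only marginal accuracy transfers. You therefore need a lower bound that tolerates per-round rather than simultaneous accuracy. The paper gets it by bounding the fraction of corrupted estimates with Markov's inequality and using that robust LP decoding tolerates a small constant fraction of arbitrary corruptions. With $n=\Theta(\sqrt T)$ this gives $f(1)=\Omega(\sqrt n)=\Omega(T^{1/4})$.
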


\subsection{Related work}\label{subsec:related}

The concept of machine unlearning was pioneered by Cao and Yang \cite{CY15}, who formalized the goal of unlearning as updating a model so that its output distribution perfectly matches that of a model trained from scratch without the deleted data. Following this, a rich line of work has proposed efficient algorithms to achieve this ``perfect retraining'' ideal (or statistical approximations of it) without incurring the full computational cost of retraining. These approaches can broadly be divided into two categories. First, \emph{exact unlearning} methods guarantee an outcome identically distributed to retraining from scratch, often by strategically partitioning the training data so that a deletion only requires retraining a small, isolated fraction of the model \cite{Bourtoule21}. Second, a large body of work focuses on \emph{approximate unlearning}, where the updated model is only required to be statistically close to the retrained model. This relaxation allows for much faster update procedures, typically by leveraging gradient steps, influence functions, or bounded statistical approximations \cite{Ginart19, GGHV20, Izzo21, NRS21, SAKS21}. In addition, several works have utilized techniques from the DP literature {\em as a tool} to achieve the approximate-retraining requirement, or to maintain the retraining guarantee across sequential and adaptive deletion requests \cite{Ullah21, GJNRSW21}. In contrast, in this work privacy is not a tool but a requirement of its own, imposed jointly with retroactivity on the entire release sequence.

Retroactive data structures were introduced by Demaine, Iacono, and Langerman \cite{DIL07}.
Their motivation was to allow inserting or deleting an operation in the past, while maintaining at any time the data structure that would have been obtained by the current sequence of operations. Our definition is an adaptation of their definition to our setting. 
A related notion of \emph{history independent}
data structures was introduced by Micciancio \cite{Mic97} and formalized by Naor and Teague \cite{NT01}. This definition is stronger and requires the \emph{memory representation} of the data structure to be blind also to the order in which the surviving elements were inserted.

\section{The model and the definitions}\label{sec:model}

Let $\X$ be a data domain and $T\in\mathbb{N}$ a time horizon. An \emph{input sequence} is a vector $S=(u_1,\dots,u_T)$ where each $u_i$ is a (possibly empty) multiset of updates arriving at time step $i$: an update is either an addition $(\add,x)$ of a point $x\in\X$ or a deletion $(\del,t,x)$, which, as in \cite{DIL07}, names the addition it cancels via its time stamp $t$. 
Input sequences are arbitrary: a deletion that names no existing addition (or one already outside its scope) simply has no effect. The \emph{current data} at time $i$ is the multiset $\Data_i(S)$ of surviving time-stamped additions. Formally, $\Data_i(S)$ can be defined using the following process, starting from the empty multiset and processing $u_1,\dots,u_i$ in order. For every $u_t$: first, for every $(\add,x)\in u_t$, add a copy of $(t,x)$ to $\Data_i(S)$; then, for every $(\del,t',x)\in u_t$, if $(t',x)\in\Data_i(S)$ then delete one copy of $(t',x)$ from $\Data_i(S)$ (and otherwise do nothing). We call an element of $\Data_i(S)$ a \emph{stamped point}: a pair $(t,x)$ whose \emph{stamp} $t$ is its arrival round, counted with multiplicity. Two additions of the same value at the same round contribute two copies of the same stamped point, and a deletion removes one copy.

\begin{example}\label{ex:running}
The following sequence $S$, with horizon $T=6$ and two values $a,b\in\X$, exercises the conventions above and will illustrate the definitions below.
\begin{center}\small
\begin{tabular}{c l l l}
$\ell$ & $u_\ell$ & $\Data_\ell(S)$ & \\
\hline
$1$ & $\{(\add,a)\}$ & $\{(1,a)\}$ & \\
$2$ & $\{(\add,b),(\add,b)\}$ & $\{(1,a),(2,b),(2,b)\}$ & the pair $(2,b)$ has multiplicity two\\
$3$ & $\{(\del,2,b)\}$ & $\{(1,a),(2,b)\}$ & one copy removed\\
$4$ & $\{(\del,1,b)\}$ & $\{(1,a),(2,b)\}$ & names a pair never added: no-op\\
$5$ & $\{(\del,2,b)\}$ & $\{(1,a)\}$ & multiplicity exhausted\\
$6$ & $\{(\del,2,b)\}$ & $\{(1,a)\}$ & exhausted: no-op\\
\end{tabular}
\end{center}
\end{example}

\begin{remark}\label{rem:model}
For simplicity, one may imagine that a deletion $(\del,t,x)$ arrives only at time steps $t'>t$, after the addition it names, though this is not part of the formal definitions.\footnote{ The process defining $\Data_i(S)$ is well-defined on every input sequence. An addition deleted within its own time step never appears in any $\Data_i(S)$, since within each $u_t$ additions are processed before deletions; and a deletion naming a future addition is a permanent no-op, since the named point is not yet present when the deletion is processed and updates are never revisited.} Time stamps induce a partial order on $\Data_i(S)$; the relative order of additions sharing a time step is unspecified, and the algorithm is free to choose it. Appendix~\ref{app:insertions} discusses the extension in which additions, too, may reach into the past.
\end{remark}

A \emph{dynamic algorithm} $\A$ is a (possibly stateful, randomized) algorithm that at every time step $i\in[T]$ receives $u_i$ and outputs an answer $a_i\in\Y$. We write $\A(S)=(a_1,\dots,a_T)$, a distribution over $\Y^T$, and $[\A(S)]_i^{j}=(a_i,\dots,a_j)$.

\subsection{Privacy} The unit of protection is a single operation, an addition or a deletion, as formally defined below. 

\begin{definition}[Neighboring sequences]\label{def:neighboring}
Input sequences $S,S'$ are \emph{neighboring} if one of them can be obtained from the other by inserting a single update into some $u_t$: either an addition $(\add,x)$, or a deletion $(\del,t',x)$.
\end{definition}

\begin{remark}[Lifecycle protection]\label{rem:lifecycle}
One point's full \emph{lifecycle}, an addition together with the deletion that later cancels it, differs by two operations, and is thus protected by group privacy.
\end{remark}

\begin{definition}[Indistinguishability]\label{def:indist}
Random variables $X,Y$ taking values in a common measurable space are \emph{$(\eps,\delta)$-indistinguishable}, denoted $X\approx_{(\eps,\delta)}Y$, if for every event $B$,
\[
\Pr[X\in B]\;\le\;e^{\eps}\cdot\Pr[Y\in B]+\delta
\qquad\text{and}\qquad
\Pr[Y\in B]\;\le\;e^{\eps}\cdot\Pr[X\in B]+\delta.
\]
\end{definition}

\begin{definition}[$(\eps,\delta)$-privacy \cite{DMNS06,DNPR10}]\label{def:privacy}
A dynamic algorithm $\A$ is \emph{$(\eps,\delta)$-private} if for every pair of neighboring input sequences $S,S'$,
\[
[\A(S)]_1^T\;\approx_{(\eps,\delta)}\;[\A(S')]_1^T.
\]
\end{definition}

Definition~\ref{def:privacy} quantifies over fixed pairs of sequences, guaranteeing privacy w.r.t.\ oblivious input streams. Appendix~\ref{sec:adaptive} defines the adaptive strengthening, in which the stream is chosen as a function of past answers. 

\subsection{Retroactivity}
The second requirement is the distributional core of perfect retraining: histories that agree on the surviving data from some time onward must induce identically distributed answers from that time onward. 

\begin{definition}[Uniting sequences]\label{def:uniting}
Input sequences $S,S'$ \emph{unite at time $i\in[T]$} if $\Data_{\ell}(S)=\Data_{\ell}(S')$ as multisets, for every $i\le\ell\le T$.
\end{definition}

Note that $S,S'$ could potentially differ by {\em many} additions/deletions, and can thus be very far from being neighboring as in Definition~\ref{def:neighboring}.

\begin{definition}[Retroactivity, adapted from \cite{DIL07}]\label{def:retroactivity}
A dynamic algorithm $\A$ is \emph{retroactive} if for every pair of input sequences $S,S'$ that unite at time $i$, 
\[
[\A(S)]_i^T\;\equiv\;[\A(S')]_i^T
\]
i.e., the two blocks of answers are identically distributed.
\end{definition}

In Example~\ref{ex:running}, the sequence $S''$ with $u''_1=\{(\add,a)\}$ and all other rounds empty satisfies $\Data_\ell(S'')=\Data_\ell(S)$ for $\ell\ge5$ and for no earlier $\ell$, so $S$ and $S''$ unite at $5$: retroactivity requires $(a_5,a_6)$ to be identically distributed under $S$ and $S''$, so the answers from round $5$ onward may not reveal that $b$ ever existed. With counting as the functionality, the true answer sequences are $(1,3,2,2,1,1)$ under $S$ and $(1,1,1,1,1,1)$ under $S''$, agreeing exactly on the united suffix.

\begin{definition}[Approximate retroactivity]\label{def:approx-retro}
Let $\alpha,\beta\ge0$. A dynamic algorithm $\A$ is \emph{$(\alpha,\beta)$-approximately retroactive} if for every pair of input sequences $S,S'$ that unite at time $i$,
\[
[\A(S)]_i^T\;\approx_{(\alpha,\beta)}\;[\A(S')]_i^T,
\]
and it is \emph{$(\alpha,\beta)$-marginally approximately retroactive} if for every such pair and every $i\le\ell\le T$,
\[
[\A(S)]_\ell^{\,\ell}\;\approx_{(\alpha,\beta)}\;[\A(S')]_\ell^{\,\ell}.
\]
\end{definition}

Retroactivity is exactly $(0,0)$-approximate retroactivity, and $(\alpha,\beta)$-approximate retroactivity implies its marginal counterpart, since every single answer is a marginal of the suffix block. The converse direction fails: the marginal requirement constrains each answer in isolation and places no constraint on how the answers may jointly encode the pre-uniting history. Section~\ref{sec:compiler} quantifies the gap.

\subsection{Private retroactive algorithms}

Our central definition combines the two requirements as follows.

\begin{definition}[Private retroactive algorithm]\label{def:main}
A dynamic algorithm $\A$ is an \emph{$(\eps,\delta)$-private retroactive algorithm} if it satisfies Definitions~\ref{def:privacy} and~\ref{def:retroactivity}.
\end{definition}

Note that Definition~\ref{def:main} is trivially satisfied by an algorithm that always outputs $\bot$. Accuracy with respect to a functionality of interest is imposed separately, and the subject of this paper is the trade off between privacy, retroactivity, and accuracy.

\section{Linear functionalities with oblivious noise}\label{sec:linear}

We begin with the class of functionalities for which the answer to Question~\ref{q:free} is affirmative in the most direct way: real valued linear statistics of the surviving data. The construction is the classical tree mechanism of \cite{DNPR10}, evaluated on the current data and carrying persistent, data independent additive noise, and the point of this section is that it is exactly retroactive as it stands, at no asymptotic cost over privacy under continual observation. 

Throughout, $h(D)\in\mathbb{Z}_{\ge0}^{\X}$ denotes the histogram of the {\em values} of a multiset $D\subseteq[T]\times\X$ of stamped points (ignoring the stamps). That is, $h(D)_x$ is the number of stamped points in $D$ with value $x$. We also write $F\in\mathbb{R}^{d\times|\X|}$ to denote a query matrix with columns $(F_x)_{x\in\X}$, where $d$ denotes the number of queries. The functionality of interest maps the current data to $Fh(\Data_\ell(S))\in\mathbb{R}^d$. 

\begin{definition}[Oblivious-noise mechanism \cite{BDKT12}]\label{def:oblivious}
A one shot mechanism $M$ for the query $F$ is an \emph{oblivious-noise mechanism} if $M(D)\equiv Fh(D)+N$, where $N\sim\nu$ for a distribution $\nu$ on $\mathbb{R}^d$ that does not depend on the input.
\end{definition}

We will show that any such (one-shot) oblivious-noise mechanism can be transformed into a retroactive one. As Bhaskara et al.~\cite{BDKT12} showed, many mechanisms for answering linear queries can be turned into oblivious-noise mechanisms, and so this construction captures quite a few settings.

\paragraph{The construction.}
Consider a complete binary tree whose leaves correspond to single time steps in $[T]$. Every node in this tree corresponds to the time interval spanned by its children (these are dyadic intervals). For $\ell\in[T]$ let $D(\ell)$ denote the canonical decomposition of $[1,\ell]$ into at most $m=\log_2(T)+1$ such dyadic intervals (at most one node/interval from every level of the tree), and recall that for every $t\le\ell$ exactly one interval of $D(\ell)$ contains $t$. 

Let $M$ be an oblivious-noise mechanism for a query matrix $F$ with noise distribution $\nu$. 
Define $\A_{F,\nu}$ to be the algorithm that maintains one persistent noise vector $N_v\sim\nu$ per node $v$ of the tree, mutually independent. At every round $\ell$, algorithm $\A_{F,\nu}$ releases
\begin{equation}
R_\ell\;=\;Fh(\Data_\ell(S))\;+\;\sum_{v\in D(\ell)}N_v\,. \label{eq:1}
\end{equation}

\begin{theorem}\label{thm:linear}
Algorithm $\A_{F,\nu}$ satisfies:
\begin{enumerate}
\item $\A_{F,\nu}$ is retroactive (as in Definition~\ref{def:retroactivity}).
\item If $M$ is $(\eps_0,\delta_0)$-DP then $\A_{F,\nu}$ is $(2m\eps_0,\,2m\delta_0)$-private (as in Definition~\ref{def:privacy}), where $m=\log_2 T+1$.
\item At every round $\ell$, the error $R_\ell-Fh(\Data_\ell(S))$ is the sum of at most $m$ independent samples from $\nu$.
\end{enumerate}
\end{theorem}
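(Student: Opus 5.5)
The three items are proved separately. Items 1 and 3 follow directly from the form of \eqref{eq:1}, while item 2 requires a reduction to the one-shot guarantee of $M$.

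\textbf{Retroactivity and error.} The noise family $(N_v)_v$ is sampled once, independently of the input. For a fixed realization of the noise, $R_\ell$ is a deterministic function of $\ell$ and of $\Data_\ell(S)$. Suppose $S,S'$ unite at time $i$. Then $\Data_\ell(S)=\Data_\ell(S')$ for every $\ell\ge i$, so under the natural coupling that uses the same noise vectors for both runs, $[\A(S)]_i^T=[\A(S')]_i^T$ pointwise. Hence the two suffix blocks are identically distributed, which is item 1. Item 3 is immediate: the error equals $\sum_{v\in D(\ell)}N_v$, where $|D(\ell)|\le m$ and the $N_v$ are i.i.d.\ samples from $\nu$.

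\textbf{Privacy.} Let $S'$ be obtained from $S$ by inserting a single update into $u_t$.

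First, I bound the effect on the data. An insertion adds one stamped point $(t,x)$ to $\Data_\ell$ for every $\ell\ge t$. A deletion removes one copy of $(t',x)$ for every $\ell\ge t$; if it is a no-op it has no effect, and a later deletion that would otherwise have removed this copy can at worst become a no-op itself. In every case there is a value $x$, a sign $s\in\{\pm1\}$ and an interval $[t,t'')$ such that $h(\Data_\ell(S'))=h(\Data_\ell(S))+s\,e_x$ exactly for $\ell\in[t,t'')$, and the histograms coincide elsewhere. The extra deletion could instead shift which copy a later deletion kills. I would handle this by noting that only value histograms matter, so in all cases the histogram difference is $\pm e_x$ on a single interval. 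This bookkeeping is the step I expect to be the main obstacle.

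Second, I rewrite the output. Define the per-node statistic $Y_v=Fh(\Delta_v)+N_v$, where $\Delta_v$ collects the signed histogram changes occurring within interval $v$. Then
\[
R_\ell=\sum_{v\in D(\ell)}\bigl(F\,\Delta h_v+N_v\bigr),
\]
with $\Delta h_v$ the net histogram change over $v$. Thus the whole output sequence is a post-processing of the family $(Y_v)_v$. Since $N_v$ is oblivious, $Y_v$ has the law of $M$ applied to a difference of histograms. By linearity, applying $M$ to a signed histogram preserves DP with respect to $\pm e_x$ changes; I would make this precise by writing it as a shift of $Fh(D)+N$ for suitable multisets $D$.

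Third, I identify which nodes change. A change of $\pm e_x$ on the interval $[t,t'')$ alters the net change $\Delta h_v$ only at time steps $t$ and $t''$. So only nodes on the two root-to-leaf paths through $t$ and $t''$ differ, and each such node differs by a single $\pm e_x$. That gives at most $2m$ nodes.

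Finally, each affected $Y_v$ is $(\eps_0,\delta_0)$-indistinguishable between $S$ and $S'$, the unaffected ones are identical, and all are independent. Basic composition over the $2m$ affected nodes, followed by post-processing, yields $(2m\eps_0,2m\delta_0)$-privacy, which is item 2.
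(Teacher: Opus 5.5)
Your proposal is correct and is essentially the paper's proof. Your per-node statistics $Y_v=F\Delta h_v+N_v$ differ between $S$ and $S'$ by $\pm F_x$ exactly on the nodes containing one but not both of $t$ and $t''$. That is the paper's coupling: shift the $N_v$ on the $m$ nodes through the window start, shift oppositely on the $m$ nodes through the first step after the window, and leave nodes containing both unchanged. Basic composition over at most $2m$ independent coordinates then gives $(2m\eps_0,2m\delta_0)$, as in the paper.

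The bookkeeping step you flagged is a one-line footnote in the paper. The extra operation changes the multiplicity of a single stamped point by one, and each later deletion of that point acts as $\mu\mapsto\max(\mu-1,0)$ in both runs. So the gap persists until both multiplicities reach zero and stays zero afterwards. Copies of a stamped point are indistinguishable, so there is no question of which copy a deletion removes.
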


\begin{proof}
\emph{Retroactivity.} Fix $S,S'$ uniting at time $i$. The answer block $[\A(S)]_i^{\,T}=(R_i,\dots,R_T)$, with $R_\ell$ the release defined in the construction above, is a randomized function of $\{\Data_\ell(S)\}_{\ell=i}^T$, and is otherwise independent of $S$. As this is identical to $\{\Data_\ell(S')\}_{\ell=i}^T$, we have that $[\A(S)]_i^{\,T}\equiv[\A(S')]_i^{\,T}$.

\emph{Privacy.} First recall that as $M$ is noise-oblivious $(\eps_0,\delta_0)$-DP, then for any element $x\in\X$ it holds that $\nu\approx_{(\eps_0,\delta_0)}\nu\pm F_x$, since for two datasets differing by adding/removing $x$ we have that $Fh$ differs by the column $F_x$, and $M$ must hide this difference.

Now let $S,S'$ be neighboring input sequences. There is a contiguous (possibly empty) time window $W\subseteq[T]$ throughout which the histograms $h(\Data_\ell(S))$ and $h(\Data_\ell(S'))$ differ by the addition/removal of one fixed element $x^*$; these histograms are otherwise equal throughout the execution.\footnote{Indeed, the differing operation changes the multiplicity of a single stamped point by one, and once the two multiplicities coincide they remain equal, so the differing rounds form an interval.} For simplicity let us assume that the window $W$ ends at time $T$, so that we only need to argue about the starting time $w_s$ of the window.

Now observe that the time step $w_s$ participates in exactly $m$ dyadic intervals/nodes, and denote these intervals as $L$. Observe that in any time step $\ell\geq w_s$, exactly one node from $L$ participates in $D(\ell)$ (and thus its corresponding noise $N_v$ participates in the output, see Equation~(\ref{eq:1})). Hence, shifting every noise $\{N_v\}_{v\in L}$ by $\pm F_{x^*}$ makes the outputs of the two executions identical at every round. These are $m$ coordinates of a product measure, each shift $(\eps_0,\delta_0)$-indistinguishable, so by composition the transcripts are $(m\eps_0,m\delta_0)$-indistinguishable.\footnote{A general window is handled by additionally shifting, in the opposite direction, the $m$ nodes containing the first time step after the window (a node containing both endpoints is left unchanged), doubling the parameters to $(2m\eps_0,\,2m\delta_0)$.}

\emph{Utility.} Immediate from Equation~(\ref{eq:1}): $|D(\ell)|\le m$ and the noises $N_v$ are mutually independent.
\end{proof}

For $d=1$ and $F$ the all ones row, instantiating $\nu$ with Laplace noise recovers the standard counter of \cite{DNPR10}, with error $\tilde{\Theta}(\log^{3/2}T)$, matching the non-retroactive lower bound for private counting \cite{DNPR10,HenzingerUU23,CohenNSS24,BairaktariLarsen26}. As the construction is exactly the standard non retroactive one, retroactivity is free for linear queries. This is the affirmative side of Question~\ref{q:free} for the linear class.

\subsection{A case study: clustering under continual observation}\label{subsec:clustering}

In this section we show that the construction of Theorem~\ref{thm:linear} extends beyond linear queries. Specifically, we revisit a result by Dupr\'e la Tour, Henzinger, and Saulpic \cite{DHS23} who presented algorithms for $k$-means and $k$-median clustering under continual observation, over streams of insertions and deletions of points from a bounded ball in $\mathbb{R}^d$. We observe that their construction is exactly retroactive as it stands, satisfying our Definition~\ref{def:main}.

\paragraph{The model.}
Fix a dimension $d$, a diameter $\Lambda>0$, and let the value domain be $B(0,\Lambda)=\{p\in\mathbb{R}^d:\|p\|_2\le\Lambda\}$. We consider a setting with at most one update per round: each update $u_i$ is either empty, or a single addition $(\add,p)$ of a point $p\in B(0,\Lambda)$, or a single deletion $(\del,t,p)$ naming a previously added stamped point. As always, $\Data_\ell(S)$ is the multiset of surviving stamped points, and multiplicities are allowed. The functionality releases at every round $\ell$ a set of $k$ centers $c_1,\dots,c_k\in\mathbb{R}^d$, evaluated by the $k$-means cost
\[
\mathrm{cost}_\ell(c_1,\dots,c_k)\;=\;\sum_{(t,p)\in\Data_\ell(S)}\ \min_{1\le j\le k}\|p-c_j\|_2^2\,,
\]
and $\mathrm{OPT}_\ell$ denotes the minimum of this cost over all sets of $k$ centers. Neighboring sequences are as in Definition~\ref{def:neighboring}, with the inserted operation occupying a previously empty round.

In our proof, we will use the following closure to post processing property of retroactive algorithms.

\begin{lemma}[Closure under memoryless post processing]\label{lem:closure}
Let $\A$ be a retroactive dynamic algorithm, let $\rho$ be a random variable drawn once from a fixed, data independent distribution and independently of everything else, and let $\B$ be the algorithm that at every round $\ell$ releases $b_\ell=g_\ell(a_\ell;\rho,\xi_\ell)$, where $a_\ell$ is the round $\ell$ answer of $\A$, the maps $g_\ell$ are fixed, and the coins $\xi_\ell$ are fresh and independent across rounds. Then $\B$ is retroactive. %
\end{lemma}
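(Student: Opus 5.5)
We fix input sequences $S,S'$ that unite at time $i$ and show that $[\B(S)]_i^T\equiv[\B(S')]_i^T$. We do this by writing the suffix of $\B$ as a single fixed randomized map applied to the suffix of $\A$, with randomness independent of $\A$'s run. Define
\[
\Phi(a_i,\dots,a_T;\rho,\xi_i,\dots,\xi_T)=\bigl(g_i(a_i;\rho,\xi_i),\dots,g_T(a_T;\rho,\xi_T)\bigr).
\]
By construction, $[\B(S)]_i^T=\Phi([\A(S)]_i^T;\rho,\xi_i,\dots,\xi_T)$. The map $\Phi$ is the same for $S$ and $S'$, because the $g_\ell$ are fixed and do not depend on the input.

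The key step is a statement about the joint law of the suffix and the extra randomness. Since $\rho$ and the $\xi_\ell$ are drawn independently of everything else, including $\A$'s internal coins and the input, the pair
\[
\bigl([\A(S)]_i^T,(\rho,\xi_i,\dots,\xi_T)\bigr)
\]
has law $\mu_S\otimes\pi$. Here $\mu_S$ is the law of $\A$'s suffix on input $S$, and $\pi$ is the fixed product law of $(\rho,\xi_i,\dots,\xi_T)$, which does not depend on the data. Retroactivity of $\A$ gives $\mu_S=\mu_{S'}$, so the two joint laws coincide. Pushing this common law forward through the measurable map $\Phi$ gives identical laws for the two $\B$-suffixes.

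Two points need care, and I expect them to be the only real obstacles.

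First, the coins $\xi_1,\dots,\xi_{i-1}$ used before time $i$ must play no role in the suffix. This holds because each $b_\ell$ depends only on $a_\ell$, $\rho$ and $\xi_\ell$. The post-processing is memoryless: it keeps no state carried over from the prefix answers $a_1,\dots,a_{i-1}$, which could encode the pre-uniting history.

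Second, the shared seed $\rho$ is also used in the prefix. This does no harm because we condition on nothing: we only need the joint law of $(\text{suffix of }\A,\rho)$. That law is a product measure, since $\rho$ is independent of $\A$ and of the input, and retroactivity of $\A$ fixes its first factor.

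A remark in the proof should note that this argument fails if $\rho$ were chosen depending on the data or on earlier answers, or if $g_\ell$ could read $a_{\ell'}$ for $\ell'<i$. This explains the hypotheses of the lemma.
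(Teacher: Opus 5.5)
Your proposal is correct and follows essentially the same argument as the paper. Both write the suffix block of $\B$ as a fixed deterministic function of $\A$'s suffix block, $\rho$, and the coins $\xi_i,\dots,\xi_T$, and use the mutual independence of these three inputs together with the retroactivity of $\A$ to conclude that the joint law of the inputs, and hence of the block, coincides under $S$ and $S'$.
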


Note that $g_\ell$ consumes only the \emph{current} answer $a_\ell$. Post processing that also consults earlier answers does not preserve retroactivity in general, as it may correlate the answer block with the pre uniting past.

\begin{proof}
Fix $S,S'$ uniting at $i$. The block $(b_i,\dots,b_T)$ is a deterministic function of the block $(a_i,\dots,a_T)$, of $\rho$, and of the fresh coins $(\xi_i,\dots,\xi_T)$. The three are mutually independent, the distribution of $(a_i,\dots,a_T)$ is the same under $S$ and $S'$ by the retroactivity of $\A$, and the distributions of $\rho$ and of the coins are data independent. Hence the joint distribution of the inputs to the common deterministic function is the same under $S$ and $S'$, and so is the distribution of the block.
\end{proof}

\begin{theorem}\label{thm:kmeans}
In the $k$-means model above, for every $\alpha,\eps>0$ there is an algorithm that is $(\eps,0)$-private (Definition~\ref{def:privacy}) and retroactive (Definition~\ref{def:retroactivity}), hence a private retroactive algorithm (Definition~\ref{def:main}), which at every round $\ell$ releases a set of $k$ centers, and which satisfies, with probability at least $0.99$, simultaneously for all rounds $\ell$: the $k$-means cost of the released centers on $\Data_\ell(S)$ is at most
\[
(1+\alpha)w^*\cdot\mathrm{OPT}_\ell\;+\;k^{O_\alpha(1)}\,d^2\log(n)^4\log(T)^{3.001}\cdot\Lambda^2/\eps,
\]
where $n=\max_\ell |\Data_\ell(S)|$ and $w^*$ is the best approximation ratio of non private static $k$-means.
\end{theorem}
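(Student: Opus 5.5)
The plan is to take the algorithm of \cite{DHS23} essentially as is, and to inherit its accuracy and privacy guarantees from their analysis, so that the only new work is retroactivity. Recall the structure of their construction. A data-independent random object $\rho$ is sampled once; it consists of a random shift and a hierarchical decomposition of $B(0,\Lambda)$, or the equivalent randomness fixing the cells, together with any randomness used to select which cells are tracked. The algorithm then runs a linear-statistics layer: for every cell $c$ it maintains continual counts of the number of surviving points in $c$, and continual sums of their coordinates. Each count or sum is a linear query $Fh(\Data_\ell(S))$ with a fixed matrix $F$, and is released through the binary-tree mechanism with Laplace noise. In our notation this is exactly $\A_{F,\nu}$ of Theorem~\ref{thm:linear}. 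Finally, at every round $\ell$ the algorithm runs a fixed procedure on the current noisy statistics: it builds a private coreset or weighted point set and calls a non-private $w^*$-approximate static $k$-means solver. This solver may use fresh coins $\xi_\ell$.

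\textbf{Retroactivity.} First I would check that the noisy statistics layer is one instance of $\A_{F,\nu}$, or a finite parallel product of such instances, all driven by the same current data. To do this I would stack all tracked cells' queries into one matrix $F_\rho$ that depends only on $\rho$. Conditioned on $\rho$, Theorem~\ref{thm:linear}(1) shows the block $(R_i,\dots,R_T)$ is a randomized function of $\{\Data_\ell\}_{\ell\ge i}$ alone, so it has the same distribution under uniting $S,S'$. Since $\rho$ is independent of the data, the same conclusion holds jointly with $\rho$. The round-$\ell$ centers are $g_\ell(R_\ell;\rho,\xi_\ell)$, and Lemma~\ref{lem:closure} then gives exact retroactivity. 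The one subtle point is that Lemma~\ref{lem:closure} requires $\A$ to be retroactive with $\rho$ independent of it, whereas here $F$ depends on $\rho$. I would handle this by conditioning on $\rho=r$ and applying the lemma's argument pointwise in $r$, which is the same proof.

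\textbf{Main obstacle.} The hardest step is verifying that the DHS23 algorithm is genuinely memoryless. Its answer at round $\ell$ must depend only on the current noisy statistics and on static randomness, and not on earlier released centers, cached decisions, or data-dependent activation of cells. Examples of the latter would be thresholding which cells are "heavy" based on past counts and then keeping that choice. If some state of this kind exists, I would replace it with a recomputation from the current tree-mechanism outputs. This is allowed because all tree noise is persistent and oblivious, so recomputation does not change the marginal accuracy guarantee at each round. It also does not change privacy, since the released statistics are unchanged and everything else is post-processing.

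\textbf{Privacy and utility.} Privacy $(\eps,0)$ follows by applying Theorem~\ref{thm:linear}(2) with Laplace noise scaled to the $\ell_1$-sensitivity of $F_\rho$. A single point lies in $O(\log n)$ cells per level of the decomposition, and the sensitivity bound follows from that. The remaining cost comes from the $m=\log_2T+1$ factor, and the rest is post-processing. For the cost bound I would cite the \cite{DHS23} analysis directly. The noise in each statistic is a sum of $m$ independent Laplace variables, as in Theorem~\ref{thm:linear}(3), and a union bound over rounds and cells produces the $\log(T)^{3.001}$ factor and the stated additive term.
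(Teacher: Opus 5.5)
Your proposal is correct and is essentially the paper's proof. The paper likewise reduces everything to two properties of \cite{DHS23}: the maintained statistics have the form $F_i\,h(\Data_\ell(S))+\sum_{v\in D(\ell)}N^{(i)}_v$, with $F_i$ fixed by data-independent randomness, and the releases are memoryless per-round post-processing with fresh coins; it then concludes via the argument of Theorem~\ref{thm:linear} and Lemma~\ref{lem:closure}, inheriting privacy and accuracy from \cite{DHS23} run with $\eps/2$. Your explicit conditioning on $\rho$ makes precise a step the paper passes over quickly, since $F_i$ depends on the projections $(\pi_i)_i$. Your recollection of the internals (random shift, coreset) differs from the actual construction (random projection, fixed net decomposition, $\lfloor\log T\rfloor$ boosted instances releasing the solution with smallest estimated cost), but the per-round selection is again memoryless post-processing of current counters, so your argument goes through unchanged.
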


\begin{proof}
The algorithm is that of \cite{DHS23}. We begin with a simplified description of their algorithm.

\begin{center}
{\setlength{\fboxsep}{10pt}%
\fbox{\begin{minipage}{0.88\textwidth}
\textbf{The algorithm of \cite{DHS23}, simplified.}
The algorithm runs in parallel $\lfloor\log T\rfloor$ \emph{instances} of a greedy clustering procedure, each with privacy budget $\eps/\lfloor\log T\rfloor$. Each instance succeeds at any given round with constant probability only; running many boosts this probability, and at every round the algorithm releases the answer of the instance that currently looks best.
\begin{itemize}
\item \emph{Setup of an instance (data independent).} First, apply a random projection $\pi_i$ to reduce the $d$-dimensional space down to a smaller $O(\log k)$-dimensional space. Then, rigidly carve this projected space into a fixed, multi-layered grid of sub-regions (a net decomposition). Crucially for retroactivity, this entire spatial structure is generated completely blindly, without ever looking at the data.
\item \emph{Counters of an instance.} For every region, three running counters are maintained via the binary mechanism: the number of surviving points in the region, the sum of their original $d$ dimensional coordinates, and the sum of their squared norms. Every arriving update is fed to all instances.
\item \emph{Releases (per round).} Each instance computes $k$ candidate centers and an estimate of their cost by post-processing its current noisy region counts, current noisy sums, and current noisy norms (without re-accessing the data). The algorithm releases the solution of the instance with the smallest estimated cost.
\end{itemize}
\end{minipage}}}
\end{center}

Our proof uses only two properties of their algorithm, both visible in the description above and verified by inspection of their construction.

\begin{enumerate}
\item[(P1)] For each instance $i$ there is a matrix $F_i$, determined by the instance's random projection $\pi_i$ and by the fixed family of regions, such that the instance's counter vector at round $\ell$ equals $F_i\,h(\Data_\ell(S))+\sum_{v\in D(\ell)}N^{(i)}_v$, where $D(\ell)$ is the dyadic decomposition of $[1,\ell]$ as in Section~\ref{sec:linear} and the $N^{(i)}_v$ are the persistent independent noises of the instance's binary mechanisms.\footnote{The counters are the region counts, the per region coordinate sums, and the per region sums of squared norms; each point lies in $k^{O(1)}\log n$ regions, so the columns of $F_i$ are sparse. Ineffective deletions, which our totality convention permits, are discarded on arrival and reach no counter.}
\item[(P2)] The releases at every round are post processing of the current counter vectors and the projections, with fresh and independent coins across rounds.
\end{enumerate}

\emph{Retroactivity.} By (P1), the joint counter block of all instances from any round $i$ onwards is a randomized function of $\{\Data_\ell(S)\}_{\ell=i}^T$, and is otherwise independent of $S$: the noise index sets $D(\ell)$ are determined by the round alone, and the projections are drawn once from fixed data independent distributions. By (P2) and Lemma~\ref{lem:closure}, with $\rho=(\pi_i)_i$, the same holds for the released centers, exactly as in the retroactivity part of Theorem~\ref{thm:linear}.

\emph{Privacy and accuracy.} These follow directly from the guarantees of \cite{DHS23} for their algorithm, which we run unchanged. (Technically, we need to run their algorithm with privacy parameter $\eps/2$, as they have a slightly different notion for which sequences are neighboring, but this does not change anything.)
\end{proof}

\begin{corollary}[$k$-median]\label{cor:kmedian}
The $k$-median algorithm of \cite{DHS23}, their Theorem~2, run with privacy parameter $\eps/2$, is $(\eps,0)$-private (Definition~\ref{def:privacy}) and retroactive (Definition~\ref{def:retroactivity}), with the accuracy guarantee stated there.
\end{corollary}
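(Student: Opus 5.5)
The plan is to follow the proof of Theorem~\ref{thm:kmeans}, with the $k$-median algorithm of \cite{DHS23} (their Theorem~2) in place of the $k$-means one. Retroactivity and privacy are handled separately, and accuracy is simply inherited from \cite{DHS23}. First I would confirm that the $k$-median construction has the same architecture as the $k$-means one: parallel instances, data-independent random projections and net decompositions, binary-mechanism counters per region, and per-round post-processing that selects the best-looking instance. This check is by inspection. The per-region statistics may differ from the $k$-means case (for example sums of norms instead of squared norms, or only counts and coordinate sums), but each is a linear function of the value histogram. So each instance's counter vector still has the form $F_i\,h(\Data_\ell(S))+\sum_{v\in D(\ell)}N^{(i)}_v$, where $F_i$ is determined by the projection $\pi_i$ and the fixed regions, and the $N^{(i)}_v$ are persistent independent noises. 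This is the analogue of (P1). I would discard ineffective deletions on arrival, as in the footnote to (P1).

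For retroactivity, fix $S,S'$ uniting at $i$. The index sets $D(\ell)$ depend only on $\ell$. The projections and the node noises are drawn from data-independent distributions. Hence the joint block of all counter vectors over rounds $\ell\ge i$ is the same measurable function of $(\{\Data_\ell(S)\}_{\ell\ge i},\rho,\{N_v\})$, and so has the same distribution under $S$ and $S'$, exactly as in the retroactivity part of Theorem~\ref{thm:linear}. Next, the analogue of (P2): the round-$\ell$ release is computed only from the round-$\ell$ counters, the projections, and fresh coins. Given that, Lemma~\ref{lem:closure} applies with $\rho=(\pi_i)_i$ and yields retroactivity of the released centers.

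For privacy, the $\eps/2$ parameter accounts for the difference in neighboring notions. Our Definition~\ref{def:neighboring} inserts a single operation. In \cite{DHS23}, by contrast, two streams are neighbors if they differ in one point's addition or deletion, possibly with the whole lifecycle changed. I would show that any two sequences that are neighboring in our sense are related by at most two neighboring steps in their sense, which could be a little wasteful but suffices. Group privacy with $\delta=0$ then turns $\eps/2$-privacy into $(\eps,0)$-privacy. Accuracy is exactly the guarantee of their Theorem~2 with $\eps/2$ substituted.

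The main obstacle is the inspection step behind (P2). The $k$-median algorithm must not carry state across rounds in its selection or post-processing: no reuse of previous centers and no cross-round comparison of cost estimates. Either of these would violate the hypothesis of Lemma~\ref{lem:closure}, which only allows post-processing of the current answer. If some history-dependent component did appear, I would try to replace it with a memoryless variant that recomputes from current counters only, and argue that the accuracy analysis of \cite{DHS23} is unaffected because it is proved round by round.
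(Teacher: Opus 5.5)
Your proposal is correct and follows the paper's own argument. You verify (P1) and (P2) by inspection, obtain retroactivity from the persistent-noise tree structure together with Lemma~\ref{lem:closure}, and inherit $(\eps,0)$-privacy and accuracy from \cite{DHS23} run at $\eps/2$, which absorbs the difference in neighboring notions. The one discrepancy is that the $k$-median algorithm of \cite{DHS23} is simpler than you anticipate: it runs a single instance, with no boosting and no cost estimates. So the cross-round selection concern you flag as the main obstacle does not arise.
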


\begin{proof}
The algorithm is covered by the same argument, and is in fact simpler: it runs a single instance, with no boosting and no cost estimates, and its releases are again per round post processing of counters maintained over fixed regions. Properties (P1) and (P2) hold verbatim, so retroactivity and privacy follow exactly as in the proof of Theorem~\ref{thm:kmeans}.
\end{proof}

\begin{remark}[Unknown horizon]\label{rem:unknown-horizon}
In our model the horizon $T$ is fixed in advance, and the algorithm above uses it. As in \cite{DHS23}, this dependence can be removed by starting more and more instances as time goes by, the $i$th instance at round $2^i$, initialized with the currently surviving points, at the cost of a slightly larger error. For retroactivity, a minor change to their initialization is then required, so that the noise structure remains independent of the past data.
\end{remark}

\section{Stability-based histograms}\label{sec:histogram}

We now give a construction for the histogram functionality over a huge bin universe $\X$: each addition places an item in a bin, each deletion removes one, and at every round the algorithm releases a sparse approximate histogram of the surviving data, with $\ell_\infty$ error independent of $|\X|$. The bin counts form a linear statistic, so Theorem~\ref{thm:linear} applies; but applied as stated it instantiates a noise variable for every bin of the universe, so its release is dense and its $\ell_\infty$ error grows with $\log|\X|$. The offline remedy is the stability based histogram \cite{KKMN09}: add noise only to the nonempty bins, release those whose noisy count crosses a threshold, and output exact zeros elsewhere. This keeps the histogram sparse (the vast majority of the $|\X|$ bins remain empty) and makes the $\ell_\infty$ error independent of $|\X|$.

The difficulty is that with this remedy the noise is not oblivious anymore: which bins carry noise depends on the data, and, when run dynamically, on the history of activity. Noise of this data dependent kind breaks the retroactivity analysis of Section~\ref{sec:linear}. We overcome this by delaying the nonlinear step of the computation to the release moment: informally, we maintain dynamically a noisy histogram with oblivious noise, and add a non linear post processing step that sparsifies the release before every round.

Throughout this section, updates are exactly as in Section~\ref{sec:model}: arbitrary multisets of additions and deletions, with no restriction. For a bin $b\in\X$ write $c_b(\ell)$ for the number of copies with value $b$ in $\Data_\ell(S)$. We retain the dyadic notation of Section~\ref{sec:linear}.\footnote{Recall: the nodes of the complete binary tree over $[T]$ are the dyadic intervals, $D(\ell)$ is the canonical decomposition of $[1,\ell]$ into at most $m=\log_2(T)+1$ such intervals, at most one per level, and every $t\le\ell$ lies in exactly one interval of $D(\ell)$.}

\paragraph{The construction.}
Let $\eps_0$ be a privacy parameter and let $\tau$ be a threshold parameter. The algorithm $\A_{\eps_0,\tau}$ stores the data, and maintains a table of noise variables $\eta_{v,b}\sim\mathrm{Lap}(1/\eps_0)$, indexed by pairs of a dyadic interval $v$ and a bin $b$, mutually independent, each instantiated the first time it is used and stored thereafter. At every round $\ell$, for each bin $b$ with $c_b(\ell)\ge1$ it computes
\[
V_\ell(b)\;=\;c_b(\ell)\;+\;\sum_{v\in D(\ell)}\eta_{v,b}\,,
\]
and it releases the sparse histogram
\[
a_\ell\;=\;\big\{(b,V_\ell(b))\;:\;c_b(\ell)\ge1\ \text{and}\ V_\ell(b)\ge\tau\big\},
\]
interpreted as assigning $0$ to every other bin (so that a bin whose history was erased by deletions is indistinguishable from one never touched). Only finitely many noise variables are ever instantiated: at most $m$ per surviving bin per round, each drawn on first use and reused in all subsequent rounds.

\begin{theorem}\label{thm:histogram}
For every $\eps_0>0$ and $\delta\in(0,1)$, set $\tau=1+\frac{m}{\eps_0}\ln\frac{2mT}{\delta}$. Then $\A_{\eps_0,\tau}$ is retroactive (Definition~\ref{def:retroactivity}), and it is $(\eps,\delta)$-private (Definition~\ref{def:privacy}) with $\eps=2m\eps_0$.
\end{theorem}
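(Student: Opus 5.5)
The plan has two parts, retroactivity and privacy, and the threshold $\tau$ is used only in the privacy part.

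\emph{Retroactivity.} I will couple the noise tables so that the release at each round is visibly a function of the current data and the table alone. Conceptually, extend the lazily instantiated table to a full, data-independent family $\{\eta_{v,b}\}$ of i.i.d.\ $\mathrm{Lap}(1/\eps_0)$ variables, indexed by all dyadic nodes $v$ and all bins $b$. Lazy instantiation does not change the joint law of the releases, because each variable is drawn once and reused. The release $a_\ell$ is then a deterministic function $g_\ell(\Data_\ell(S),\eta)$: it reads $c_b(\ell)$ only for bins with $c_b(\ell)\ge1$, and reads only the variables $\eta_{v,b}$ with $v\in D(\ell)$. Hence $(a_i,\dots,a_T)$ is a fixed function of $(\Data_\ell(S))_{\ell\ge i}$ and of one data-independent random object. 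If $S,S'$ unite at $i$, their blocks are therefore identically distributed, exactly as in the retroactivity part of Theorem~\ref{thm:linear} (or via Lemma~\ref{lem:closure} with $\rho=\eta$). The point to stress is that a bin that was never touched and a bin whose history was erased are both invisible to $g_\ell$.

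\emph{Privacy.} Let $S,S'$ be neighbors. As in the proof of Theorem~\ref{thm:linear}, there is a single bin $b^*$ and a contiguous window $W$ on which $c_{b^*}$ differs by exactly one between the two runs. All other bins and rounds are identical. Only the coordinates $\eta_{\cdot,b^*}$ matter, so I reduce to that one bin; the other bins use independent noise and identical counts. Shifting by $\pm1$ the $m$ variables $\eta_{v,b^*}$ with $v$ containing the first round of $W$, and in the opposite direction the $m$ variables for the first round after $W$, aligns the values $V_\ell(b^*)$ in every round. This costs $e^{2m\eps_0}$ by Laplace shift plus composition, just as in Theorem~\ref{thm:linear}.

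The obstacle is the indicator $c_{b^*}(\ell)\ge1$. This indicator can differ between the runs: in one run the count is $1$ and the bin may be released, while in the other it is $0$ and the bin is suppressed. No shift can fix this. I will handle it by a bad event $E$ that the run with count $1$ releases $b^*$ at some round $\ell\in W$ where the other run has $c_{b^*}(\ell)=0$. In that run $V_\ell(b^*)=1+\sum_{v\in D(\ell)}\eta_{v,b^*}$. Releasing requires this sum of at most $m$ Laplace variables to exceed $\tau-1=\frac{m}{\eps_0}\ln\frac{2mT}{\delta}$. That forces some single $\eta_{v,b^*}\ge\frac1{\eps_0}\ln\frac{2mT}{\delta}$, which has probability at most $\frac{\delta}{2mT}$. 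A union bound over the $m$ nodes and $T$ rounds gives $\Pr[E]\le\delta/2$; I will take care that the union bound counts the relevant pairs $(v,\ell)$ correctly.

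Off $E$, the two runs release identical sets, up to the coupled shift, in the rounds where the indicator differs: both runs suppress $b^*$. In the remaining rounds the shift argument applies. So for every event $B$ we get $\Pr[\A(S)\in B]\le e^{2m\eps_0}\Pr[\A(S')\in B]+\delta$, and symmetrically. This is the claimed $(2m\eps_0,\delta)$-privacy. I will apply the standard decomposition $\Pr[B]\le\Pr[B\cap\bar E]+\Pr[E]$ on the side where the count is positive. On the other side I will note that suppression is automatic, and that the shifted noise on the $m$ affected nodes is only evaluated where both counts are positive.
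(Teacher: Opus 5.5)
Your route is the paper's. For retroactivity, you treat the noise table as one data-independent random object, which is equivalent to the paper's index-set argument via $H_i$. For privacy, you shift at most $2m$ Laplace coordinates of bin $b^*$ and take a union bound over (node, round) pairs on the event that the thresholded release of $b^*$ differs at a round of $W$ where one count is $0$ and the other is $1$. Your bound $\Pr[E]\le\delta/2$ in the run holding the extra unit (call it $S^+$, and the other run $S^-$) is correct. So is the resulting inequality $\Pr[\A(S^+)\in B]\le e^{2m\eps_0}\Pr[\A(S^-)\in B]+\delta$.

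The reverse inequality does not come for free, contrary to your ``suppression is automatic'' step. Under the shift coupling, the output of $S^-$ equals the output of $S^+$ with $b^*$ stripped at the rounds of $W$ where $c^-_{b^*}(\ell)=0$. So the two outputs still differ on the transported event $E^-=\{\exists\ell\in W: c^-_{b^*}(\ell)=0,\ \sum_{v\in D(\ell)}\eta_{v,b^*}\ge\tau\}$, where $\eta$ is the noise of $S^-$ itself. You must write $\Pr[\A(S^-)\in B]\le\Pr[\A(S^-)\in B,\ \bar E^-]+\Pr[E^-]\le e^{2m\eps_0}\Pr[\A(S^+)\in B]+\Pr[E^-]$. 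The same union bound gives $\Pr[E^-]\le\delta/2$, with threshold $\tau$ instead of $\tau-1$, so the theorem survives.

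The $\delta$ term cannot be dropped here. Releases at later rounds may share dyadic nodes with a suppressed round and so correlate with the hidden noise. For example, take $T=4$. Let $S^+$ add $b^*$ at round $1$, and let both sequences add $1000$ copies of $b^*$ at round $3$. Then $D(2)=\{[1,2]\}$ and $D(3)=\{[1,2],[3,3]\}$. Consider the event ``$b^*$ is absent at round $2$ and released at round $3$ with value at least $1000+x$''. Under $S^-$ it has probability $\frac14(2+x\eps_0)e^{-x\eps_0}$. Under $S^+$ it forces $\eta_{[1,2],b^*}<\tau-1$ and hence $\eta_{[3,3],b^*}>x-\tau$, so its probability is at most $\frac12e^{-(x-\tau)\eps_0}$ for $x>\tau$. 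The ratio of the two is unbounded as $x\to\infty$, so no pure $e^{\eps}$ bound holds in this direction.
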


\begin{proof}[Proof of retroactivity]
Fix $S,S'$ uniting at time $i$. The answer block $[\A(S)]_i^{\,T}=(a_i,\dots,a_T)$ is a deterministic function of the counts $\{c_b(\ell)\}_{\ell\ge i,\,b\in\X}$ (which are determined by $\{\Data_\ell(S)\}_{\ell=i}^T$), together with the noise variables of intervals involving bins that are non-empty at some time $\ell\geq i$, that is,
\[
H_i\;=\;\big\{\eta_{v,b}\;:\;\text{$v\in D(\ell)$ and $c_b(\ell)\ge1$ for some $\ell\ge i$}\big\}.
\]
Note that the index set of $H_i$ is itself determined by $\{\Data_\ell(S)\}_{\ell=i}^T$ and is the same for $S$ and $S'$. Each $\eta_{v,b}$ is an independent $\mathrm{Lap}(1/\eps_0)$ variable regardless of the round at which it was instantiated, which may precede $i$. The noise variables of pairs $(v,b)$ outside $H_i$, including those of bins whose entire activity was erased by deletions before round $i$, do not influence the answer block, because bins with $c_b(\ell)=0$ are released as exact zeros. So $[\A(S)]_i^{\,T}\equiv[\A(S')]_i^{\,T}$.
\end{proof}

\begin{proof}[Proof of privacy]
Let $S,S'$ be neighboring input sequences. As established in the privacy analysis of Theorem~\ref{thm:linear}, there are a bin $b^*$ and a contiguous (possibly empty) time window $W\subseteq[T]$ throughout which the counts $c_{b^*}(\ell)$ and $c'_{b^*}(\ell)$ of the two executions differ by one unit, all other bins having identical counts at all times. For simplicity let us assume that the window $W$ ends at time $T$, so that we only need to argue about its starting time $w_s$.\footnote{A general window is handled as in Theorem~\ref{thm:linear}, by additionally shifting, in the opposite direction, the $m$ intervals containing the first time step after the window (an interval containing both endpoints is left unchanged); this doubles the number of shifted variables to at most $2m$, and the distortion factor below becomes $e^{2m\eps_0}=e^{\eps}$.}

Now observe that the time step $w_s$ participates in exactly $m$ dyadic intervals, and denote these intervals as $L$; in any time step $\ell\ge w_s$, exactly one interval from $L$ participates in $D(\ell)$. Hence, shifting every noise $\eta_{v,b^*}$, $v\in L$, by $\pm1$, with the sign matching the difference, makes the noisy values $V_\ell(b^*)$ of the two executions identical at every round, all other bins carrying identical values as well. The shift moves $m$ independent $\mathrm{Lap}(1/\eps_0)$ variables by $1$ each, so it distorts probabilities by a factor of at most $e^{m\eps_0}\le e^{\eps}$, in both directions.

Under this coupling the two executions can differ only at rounds $\ell\in W$ where one execution has a true count of 0 and the other has a true count of 1: The execution holding the extra unit releases its value $V_\ell(b^*)$ if it crosses $\tau$ while the other releases an exact zero. By a union bound over the at most $T$ rounds and the $m$ summand noises,
\[
\Pr\left[\exists\ell\in W: 
\begin{array}{l}
\text{in one execution }  V_\ell(b^*)\ge\tau\\
\text{and in the other the true count is } 0
  \end{array}
\right]\;\le\;T\cdot m\cdot e^{-(\tau-1)\eps_0/m}\;\le\;\delta\,.
\]
Overall, for every event $E$ we have $\Pr[\A(S)\in E]\le e^{\eps}\Pr[\A(S')\in E]+\delta$, as required.
\end{proof}

\begin{remark}[Accuracy]\label{rem:histogram-accuracy}
The released histogram never contains a bin with no surviving items, and at bins with $c_b(\ell)\ge1$ its error is at most $|V_\ell(b)-c_b(\ell)|$ plus $\tau$ when the bin is thresholded away. With probability $1-\beta$, simultaneously for all rounds and all released bins, every noise sum is at most $\frac{m}{\eps_0}\ln\frac{mT\cdot A}{\beta}$ in absolute value, where $A$ bounds the number of instantiated pairs; hence the $\ell_\infty$ error is $O\big(\frac{\log^2 T}{\eps}\log\frac{T}{\delta\beta}\big)$ with $\eps=2m\eps_0$, independent of $|\X|$, and with exact zeros outside the true support. We do not optimize the polylogarithmic factors.
\end{remark}

\section{Generic compilers}\label{sec:compiler}

In this section we present two generic constructions that, under certain conditions, allow us to transform differentially private algorithms into retroactive ones.

\subsection{From continual privacy to marginal approximate retroactivity}\label{subsec:smoothing}

In this section, we demonstrate that marginal approximate retroactivity can be achieved generically. Specifically, any accurate algorithm that satisfies continual DP can be transformed into one that additionally satisfies marginal retroactivity. More generally, this transformation guarantees that any block of $B$ consecutive answers remains approximately retroactive, with an approximation error that grows as $\sqrt{B}$. 

\paragraph{Intuition.} Suppose we have a (continual) DP algorithm $\A$, approximating some functionality $f$ of the input sequence. Now recall that once two input sequences unite, then the exact value of $f$ on them becomes identical. Hence, if the algorithm $\A$ is highly accurate, then its outputs for both streams must be very close to this shared truth (say within an error margin $E$).  Noisifying $\A$'s answers proportionally to $E$ masks this difference, thereby making it (marginally) retroactive.

\medskip

For concreteness, throughout this subsection we assume that the target functionality $f$ is a mapping from multisets over $\X$ to $\mathbb{R}^d$, and that the error of the base DP algorithm is measured via the Euclidean distance. See Remark~\ref{rem:compiler-general} for an extension to other metric spaces.

\begin{definition}[Accuracy]\label{def:accuracy}
A dynamic algorithm $\A$ is \emph{$(E,\beta)$-accurate for $f$} if for every input sequence $S$, with probability at least $1-\beta$, $\|a_\ell-f(\Data_\ell(S))\|\le E$ simultaneously for all $\ell\in[T]$.
\end{definition}

Given a dynamic algorithm $\A$ and $\sigma>0$, the \emph{compiled algorithm} $\A^{\sigma}$ runs $\A$, and at every round releases $z_\ell=a_\ell+N_\ell$, where $a_\ell$ is the answer of $\A$ and $N_\ell\sim\mathcal{N}(0,\sigma^2 I_d)$ is fresh and independent across rounds. We use two standard facts about Gaussian noise \cite{DKMMN06,DR14}: for $\alpha\in(0,1]$, $\beta_K\in(0,1)$ and $\sigma=\frac{2E}{\alpha}\sqrt{2\ln(1.25/\beta_K)}$, we have $y+N\approx_{(\alpha,\beta_K)}y'+N$ whenever $\|y-y'\|\le2E$, and $\|N\|\le\sigma\big(\sqrt d+\sqrt{2\ln(1/\beta_K)}\big)=:E_K$ except with probability $\beta_K$.

\begin{theorem}[Compiler]\label{thm:compiler}
Let $\A$ be $(\eps,\delta)$-private (Definition~\ref{def:privacy}) and $(E,\beta_0)$-accurate for $f$, let $\alpha\in(0,1]$ and $\beta_K\in(0,1)$, and let $\sigma$ and $E_K$ be as above. Then:
\begin{enumerate}[itemsep=1pt,topsep=3pt]
\item $\A^{\sigma}$ is $(\eps,\delta)$-private.
\item $\A^{\sigma}$ is $(E+E_K,\,\beta_0+T\beta_K)$-accurate for $f$.
\item For every pair $S,S'$ uniting at time $i$, every block $i\le\ell_1\le\ell_2\le T$ of length $B=\ell_2-\ell_1+1$, and every $\beta'\in(0,1)$,
\[
[\A^{\sigma}(S)]_{\ell_1}^{\ell_2}\;\approx_{(\alpha_B,\,\beta_B)}\;[\A^{\sigma}(S')]_{\ell_1}^{\ell_2},
\]
where $\alpha_B=\min\big\{B\alpha,\ \alpha\sqrt{2B\ln(1/\beta')}+B\alpha(e^{\alpha}-1)\big\}$ and $\beta_B=B\beta_K+\beta'+(2+e^{\alpha_B})\beta_0$.
\end{enumerate}
\end{theorem}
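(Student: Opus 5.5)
Parts 1 and 2 are routine, so I would dispose of them first. For privacy, $\A^{\sigma}$ is a post-processing of the full transcript $[\A(S)]_1^T$ by independent, data-independent Gaussian noise, so $(\eps,\delta)$-privacy carries over directly from Definition~\ref{def:privacy}. For accuracy, a union bound over the failure event of $\A$ (probability $\beta_0$) and the $T$ events $\|N_\ell\|>E_K$ (each of probability $\beta_K$), combined with the triangle inequality, gives $\|z_\ell-f(\Data_\ell(S))\|\le E+E_K$ for all $\ell$, except with probability $\beta_0+T\beta_K$.

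Part 3 carries the real content. Fix $S,S'$ uniting at $i$ and a block $[\ell_1,\ell_2]\subseteq[i,T]$. For every $\ell$ in the block we have $f(\Data_\ell(S))=f(\Data_\ell(S'))=:y^*_\ell$. Let $G$ (respectively $G'$) be the event that $\A$ run on $S$ (respectively $S'$) is $E$-accurate at every round; each fails with probability at most $\beta_0$.

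I would couple the two executions as follows. Conditioned on $G$, the answer block $(a_{\ell_1},\dots,a_{\ell_2})$ lies coordinatewise within $E$ of $(y^*_\ell)$. The same holds for $a'$ conditioned on $G'$. So for any fixed realizations $a$ and $a'$ in the good sets, $\|a_\ell-a'_\ell\|\le 2E$ for every $\ell$. The block $(a_\ell+N_\ell)_\ell$ with fixed $a$ is a product of $B$ independent Gaussian mechanisms, each $(\alpha,\beta_K)$-indistinguishable from its counterpart with $a'$ by the stated Gaussian fact. Applying basic composition gives $(B\alpha,B\beta_K)$, and the advanced composition theorem gives the second expression in $\alpha_B$ with the extra $\beta'$. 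Thus for fixed good $a,a'$ the noisy blocks are $(\alpha_B,B\beta_K+\beta')$-indistinguishable.

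The main obstacle is lifting this pointwise statement to the mixtures over the random (and arbitrarily correlated) answer blocks. That lift is what produces the $(2+e^{\alpha_B})\beta_0$ term. My plan is to go through a deterministic reference point. Define $P_a$ as the law of $a+N$ and compare both $P_a$ for good $a$ and $P_{a'}$ for good $a'$ to $P_{y^*}$, using distance at most $E$ rather than $2E$. This gives a bound no worse than the $2E$ one.

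Then for an event $B_0$:
\[
\Pr[z\in B_0]\le \Pr[\neg G]+\mathbb{E}\big[P_a(B_0)\mathbf 1_G\big]\le \beta_0+e^{\alpha_B}P_{y^*}(B_0)+\beta_B'.
\]
In the other direction, $P_{y^*}(B_0)\le e^{\alpha_B}\,\mathbb{E}\big[P_{a'}(B_0)\mid G'\big]+\beta_B'$, and $\mathbb{E}[P_{a'}(B_0)\mid G']\le \Pr[z'\in B_0]/(1-\beta_0)$. This two-hop argument would double $\alpha_B$, which the statement avoids.

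So instead I would apply the pairwise $2E$ bound directly. I would mix over the conditional laws of $a$ given $G$ and $a'$ given $G'$, using the fact that $(\alpha,\beta)$-indistinguishability is preserved under taking mixtures on one side against a fixed point and then on the other side. Convexity of the hockey-stick divergence in each argument gives exactly this. This yields $\Pr[z\in B_0\mid G]\le e^{\alpha_B}\Pr[z'\in B_0\mid G']+B\beta_K+\beta'$.

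Finally I would remove the conditioning. Using $\Pr[z\in B_0]\le\Pr[z\in B_0\mid G]+\beta_0$ and $\Pr[z'\in B_0\mid G']\le\Pr[z'\in B_0]+\beta_0$ (since a conditional probability exceeds the unconditional one by at most $\Pr[\neg G']$ after normalization, and I would verify that the slack is at most $\beta_0$ up to the normalization factor), I obtain additive slack $\beta_0+e^{\alpha_B}\beta_0$. Accounting for the remaining normalization loosely gives the stated $(2+e^{\alpha_B})\beta_0$. The argument is symmetric, so the same bound holds in both directions.
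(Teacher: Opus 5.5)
Your proposal is correct and is essentially the paper's argument. You establish pairwise $(\alpha_B,B\beta_K+\beta')$-indistinguishability for every pair of accurate blocks at distance $2E$ via per-round Gaussian noise and composition, lift it to the mixtures by the sup/inf (equivalently, convexity) bound, and pay $\beta_0$ for the inaccurate runs on each side. Your conditioning step also needs no loose accounting: $\Pr[z'\in B_0\mid G']\le\Pr[z'\in B_0]+\beta_0$ holds exactly, so you actually get slack $(1+e^{\alpha_B})\beta_0$, which implies the stated $(2+e^{\alpha_B})\beta_0$.
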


\begin{proof}
\emph{Part 1 (Privacy).} Standard $(\eps,\delta)$-privacy follows directly from the immunity of differential privacy to post-processing, since the compiled transcript is simply the raw transcript with independent Gaussian noise added to every answer.

\emph{Part 2 (Accuracy).} This follows from the triangle inequality, a union bound over the $T$ rounds of the Gaussian norm tail, and the baseline accuracy assumption of $\A$.

\emph{Part 3 (Block Retroactivity).} Fix $S,S'$ uniting at $i$ and a block $[\ell_1,\ell_2]\subseteq[i,T]$ of length $B$. Since the sequences unite at $i$, the true answers $f(\Data_\ell(S))=f(\Data_\ell(S'))$ coincide at every round of the block. Let $G$ be the set of answer vectors $a=(a_{\ell_1},\dots,a_{\ell_2})$ that are $E$-accurate at every round of the block with respect to these true answers. By the accuracy of $\A$, $\Pr[[\A(S)]_{\ell_1}^{\ell_2}\in G]$ and $\Pr[[\A(S')]_{\ell_1}^{\ell_2}\in G]$ are both at least $1-\beta_0$. 
For $a\in G$ let ${\rm GN}(a)$ denote the random vector $(a_\ell+N_\ell)_{\ell=\ell_1}^{\ell_2}$, where the $N_\ell\sim\mathcal{N}(0,\sigma^2 I_d)$ are independent.
Note that any two $a,a'\in G$ are within distance $2E$ at every round of the block. Thus, by the properties of the Gaussian mechanism \cite{DKMMN06,DR14}, composed across the $B$ rounds \cite{DRV10}, for every $a,a'\in G$ we have ${\rm GN}(a)\approx_{(\alpha_B,\beta'_B)}{\rm GN}(a')$, where $\beta'_B=B\beta_K$ under basic composition and $\beta'_B=B\beta_K+\beta'$ under advanced composition. Hence, for any event $F$ we have
\begin{align*}
\Pr\big[[\A^{\sigma}(S)]_{\ell_1}^{\ell_2}\in F\big]
&\le \Pr\big[[\A(S)]_{\ell_1}^{\ell_2}\in G\big]\cdot\sup_{a\in G}\Pr[{\rm GN}(a)\in F]+\beta_0\\
&\le \Big(\Pr\big[[\A(S')]_{\ell_1}^{\ell_2}\in G\big]+\beta_0\Big)\Big(e^{\alpha_B}\inf_{a'\in G}\Pr[{\rm GN}(a')\in F]+\beta'_B\Big)+\beta_0\\
&\le e^{\alpha_B}\Pr\big[[\A^{\sigma}(S')]_{\ell_1}^{\ell_2}\in F\big]+\beta'_B+(2+e^{\alpha_B})\beta_0\,.
\end{align*}
The reverse inequality holds by symmetry.
\end{proof}

\begin{corollary}\label{cor:marginal}
In the setting of Theorem~\ref{thm:compiler}, $\A^{\sigma}$ is $(\alpha,\,\beta_K+(2+e^{\alpha})\beta_0)$-marginally approximately retroactive.
\end{corollary}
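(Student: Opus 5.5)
Corollary~\ref{cor:marginal} is the special case $B=1$ of Part~3 of Theorem~\ref{thm:compiler}, so the plan is to specialize that statement and check the parameters. Fix $S,S'$ uniting at time $i$ and a round $\ell\ge i$, and take the block $\ell_1=\ell_2=\ell$, so that $B=1$ and $[\A^{\sigma}(S)]_{\ell}^{\ell}$ is exactly the single answer $z_\ell$ required by the marginal condition of Definition~\ref{def:approx-retro}.

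The only thing to verify is which composition bound to use. With $B=1$ we use the basic-composition branch of the minimum, which gives $\alpha_B=B\alpha=\alpha$. No composition is really needed: a single round is just one Gaussian-mechanism application. In that branch the proof of Part~3 gives $\beta'_B=B\beta_K=\beta_K$, with no $\beta'$ term, because the $\beta'$ slack only appears when the advanced-composition bound is chosen.

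Substituting these values into the chain of inequalities at the end of the proof of Theorem~\ref{thm:compiler} gives, for every event $F$,
\[
\Pr[z_\ell(S)\in F]\le e^{\alpha}\Pr[z_\ell(S')\in F]+\beta_K+(2+e^{\alpha})\beta_0,
\]
and the reverse inequality holds by symmetry. Since $\ell\ge i$ was arbitrary, this is $(\alpha,\beta_K+(2+e^{\alpha})\beta_0)$-marginal approximate retroactivity.

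The one point needing care is this bookkeeping. The parameter $\beta_B$ stated in Theorem~\ref{thm:compiler} includes $\beta'$ unconditionally, but the proof shows that $\beta'$ is needed only under advanced composition. So I would cite the proof's basic-composition branch, or equivalently note that $\beta'$ may be taken arbitrarily small when $B=1$ and the basic branch is selected. There is no other obstacle.
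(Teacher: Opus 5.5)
Your proposal is correct and matches the paper's intended argument: the corollary is the $B=1$ case of Part~3 of Theorem~\ref{thm:compiler}, with $\ell_1=\ell_2=\ell$ and the basic-composition branch, which gives $\alpha_B=\alpha$ and $\beta'_B=\beta_K$. Your handling of the stray $\beta'$ term is also sound: the stated bound holds for every $\beta'\in(0,1)$, for small $\beta'$ the minimum equals $\alpha$, and so letting $\beta'\to0$ recovers the claimed $\beta_K+(2+e^{\alpha})\beta_0$.
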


\begin{remark}[Beyond $\mathbb{R}^d$]\label{rem:compiler-general}
We worked in $\mathbb{R}^d$ for simplicity, but the argument generalizes to any answer metric space $(\Y,d)$, provided that it admits a randomized map $K$ playing the role of Gaussian noise: $K(y)\approx_{(\alpha,\beta_K)}K(y')$ whenever $d(y,y')\le2E$, and $d(K(y),y)\le E_K$ except with probability $\beta_K$. Nothing else about the noise is used.
\end{remark}

\begin{remark}[An approximately retroactive median with deletions]\label{rem:compiler-median}
As an example of Remark~\ref{rem:compiler-general}, consider the task of reporting the median of the current data (which undergoes insertions and deletions) over a {\em huge} ordered domain. Without retroactivity, this can be solved privately (in the continual-DP model) by maintaining two dyadic trees over $[T]$, one for additions and one for deletions. Each node holds a one shot DP estimate of the empirical CDF of the points processed in its interval, with rank error depending on the domain only through $2^{O(\log^*|\X|)}$ \cite{BNSV15}. At every round the algorithm releases the difference of the two estimates accumulated along $D(\ell)$, which produces an approximate CDF of the remaining data points (from which one could estimate the median by post-processing). This base algorithm handles arbitrary deletions and is private by the usual tree accounting. It is accurate with $E=\mathrm{polylog}(T)\cdot 2^{O(\log^*|\X|)}/\eps$, in the metric of rank discrepancy between CDF estimates. 
Note that we cannot add real-valued noise to the ``non-retroactive'' median estimation, as we aim to approximate the median by rank. In the terminology of Remark~\ref{rem:compiler-general}, the map $K$ which we will use to noisify the non-retroactive CDF is the cumulatively differentially private median algorithm of \cite[Theorem~4.2]{CLNSS23}.\footnote{The non-retroactive estimate is a difference of two CDF estimates; before feeding it to the algorithm of \cite[Theorem~4.2]{CLNSS23} we transform it into a sanitized dataset via standard techniques.} Two CDF estimates within rank discrepancy $2E$ induce inputs at cumulative distance $O(E)$, so running their algorithm with privacy parameter $\Theta(\alpha/E)$ gives, by group privacy, the closeness that Remark~\ref{rem:compiler-general} requires, with rank error $E_K=\widetilde{O}(E\cdot 2^{\log^*|\X|}/\alpha)$. The result is an $(\eps,\delta)$-private median with arbitrary deletions, rank error $\mathrm{polylog}(T)\cdot 2^{O(\log^*|\X|)}/(\eps\alpha)$, and the retroactivity guarantees of Theorem~\ref{thm:compiler}.
\end{remark}

\subsection{Exact retroactivity for window functionalities: sampling with expiry}\label{subsec:sampling}

In this section we present a generic transformation that takes a {\em one shot} DP algorithm $M$ and produces an \emph{exactly} retroactive private algorithm for the sliding window version of $M$'s task. For {\em linear} window statistics, such a result follows directly from Theorem~\ref{thm:linear}, so no transformation is needed. The motivation for this section thus comes from non-linear statistics. Let $W\in[T]$ be a fixed window length. 
The construction is based on maintaining a uniformly random sample from the surviving points in the window, and running $M$ on this sample. Intuitively, the fact that we sample (only) from the surviving points guarantees retroactivity. Privacy will be enforced using the window, which caps every element's influence, amplified by the secrecy of the sample within the window.

\begin{definition}
 A surviving stamped point $(t,x)$ is \emph{live} at round $\ell$ if $\ell-t<W$ (note that $t\leq\ell$ for $(t,x)$ to be surviving at time $\ell$). So, every stamped point expires exactly $W$ rounds after its arrival. Let $\Live_\ell(S)=\{(t,x)\in\Data_\ell(S):\ell-t<W\}$ denote the multiset of live stamped points. %
Write $n_\ell(S)=|\Live_\ell(S)|$.
\end{definition}

\paragraph{The compiler.}
Let $M$ be an algorithm taking a multiset over $\X$ to a distribution over answers, and fix a target sample size $k\ge1$. The compiled algorithm $\A^{\mathrm{samp}}_{M,k,W}$ draws at every round $\ell$, with fresh coins, a uniformly random subset $Q_\ell$ of size $\min\{k,n_\ell\}$ of $\Live_\ell(S)$, and releases $z_\ell=M(Q_\ell)$, applied to the multiset of values of the sampled points, with fresh independent coins for $M$ as well.

\begin{theorem}[Sampling compiler]\label{thm:sampling}
Let $M$ be $(\eps_0,\delta_0)$-differentially private with respect to adding or removing one record. Then the following hold for $\A=\A^{\mathrm{samp}}_{M,k,W}$.
\begin{enumerate}
\item \emph{Retroactivity.} $\A$ is exactly retroactive (Definition~\ref{def:retroactivity}).
\item \emph{Privacy.} $\A$ is $(2W\eps_0,\,2We^{\eps_0}\delta_0)$-private (Definition~\ref{def:privacy}); moreover, for every $\delta'>0$ it is $(\eps_W,\delta_W)$-private with
\[
\eps_W=2\eps_0\sqrt{2W\ln(1/\delta')}+2W\eps_0(e^{2\eps_0}-1)\,,\qquad \delta_W=2We^{\eps_0}\delta_0+\delta'.
\]
\item \emph{Amplification by the secrecy of the sample.} Let $n\ge k$ be an occupancy parameter\footnote{$n$ is a parameter of the guarantee, not of the algorithm.}, denote $q=k/(n+1)$, and assume $\eps_0\le1$. Restricted to neighboring pairs in which both sequences satisfy $n_\ell\ge n$ for all $\ell$, $\A$ is, for every $\delta'>0$, $(\eps_{W,q},\delta_{W,q})$-private with
\begin{align*}
\eps_{W,q}&=\eps_q\sqrt{2W\ln(1/\delta')}+W\eps_q(e^{\eps_q}-1)\,,\qquad \delta_{W,q}=2Wq\,e^{\eps_0}\delta_0+\delta'\,,\\
\eps_q&=\ln\big(1+q(e^{2\eps_0}-1)\big)\le 7q\eps_0 .
\end{align*}
\end{enumerate}
\end{theorem}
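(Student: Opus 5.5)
The proof splits into three parts, one per claim. In each part we view the release $z_\ell$ as the composition of a per-round mechanism $z_\ell=M(Q_\ell)$ over rounds, using fresh coins every round.

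\emph{Retroactivity.} First note that $\Live_\ell(S)$ is a deterministic function of $\Data_\ell(S)$ and $\ell$, because liveness depends only on the stamp $t$ and the round. Now fix $S,S'$ uniting at $i$. For every $\ell\ge i$ we have $\Live_\ell(S)=\Live_\ell(S')$. Given the live multisets, the answers $z_i,\dots,z_T$ are independent across rounds, since both the sample $Q_\ell$ and the coins of $M$ are fresh. So the suffix block is a product of per-round distributions, each determined by $\Live_\ell$ alone. This is the same reasoning as the retroactivity part of Theorem~\ref{thm:linear}, only simpler, because no state is carried between rounds. Hence $[\A(S)]_i^T\equiv[\A(S')]_i^T$.

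\emph{Privacy, basic and advanced.} Let $S,S'$ be neighbors. As in the proof of Theorem~\ref{thm:linear}, the extra operation changes the multiplicity of a single stamped point $(t^*,x^*)$ by one, and only during a contiguous window of rounds. Restricting to live points cuts this down further. The differing point can be live only at rounds $\ell\in[t^*,t^*+W)$, so $\Live_\ell(S)$ and $\Live_\ell(S')$ differ by one copy at no more than $W$ rounds and coincide at all others.

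At a differing round the per-round mechanism is $D\mapsto M(\text{uniform }\min\{k,|D|\}\text{-subset of }D)$. For neighbors $D$ and $D'=D\cup\{p\}$ we couple the two subsamples. If $|D|<k$, the samples are the full sets and differ by at most one record. Otherwise, a $k$-subset of $D'$ either excludes $p$, in which case it is a uniform $k$-subset of $D$, or includes $p$, in which case it equals a $k$-subset of $D$ with one element swapped for $p$. A swap is a substitution, that is, a removal followed by an addition. So the pair of subsets can be coupled at add/remove distance at most $2$, which gives $(2\eps_0,(1+e^{\eps_0})\delta_0)$ indistinguishability per round by group privacy. The careful version of this step should yield the stated $2e^{\eps_0}\delta_0$ per round; I expect a slightly cleaner argument via a mixture over whether $p$ is sampled to give exactly that.

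Basic composition over the at most $W$ differing rounds then gives $(2W\eps_0,2We^{\eps_0}\delta_0)$, and advanced composition \cite{DRV10} with per-round $2\eps_0$ gives $\eps_W$ and $\delta_W$. Since the mechanism for the adaptive sequence of rounds is non-interactive with respect to the fixed sequences, composition applies directly.

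\emph{Amplification.} This is where I expect the main obstacle. When $n_\ell\ge n\ge k$, we need a tight subsampling-without-replacement amplification bound for the per-round step. We decompose $M(Q_\ell)$ on $D'$ as a mixture: with probability $k/|D'|\le k/(n+1)=q$ the point $p$ is sampled, and otherwise the sample is distributed exactly as a sample from $D$. The sampled branch is further coupled to a sample from $D$ by a substitution, which is $(2\eps_0,\cdot)$-close. The standard amplification lemma, via the advanced joint convexity argument, then gives $\big(\ln(1+q(e^{2\eps_0}-1)),\,q\cdot 2e^{\eps_0}\delta_0\big)$ per round. For the removal direction we argue symmetrically, and the occupancy hypothesis applied to both sequences keeps the sampling probability at most $q$.

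The estimate $\eps_q\le 7q\eps_0$ for $\eps_0\le1$ follows from $e^{2\eps_0}-1\le(e^2-1)\eps_0\le 7\eps_0$ together with $\ln(1+y)\le y$. Finally, advanced composition over the $W$ affected rounds with per-round $(\eps_q,2qe^{\eps_0}\delta_0)$ gives $\eps_{W,q}$ and $\delta_{W,q}$. The delicate points are getting the substitution-versus-add/remove accounting to produce exactly the factor $e^{2\eps_0}$, and making sure the mixture bound holds in both directions of the neighboring relation.
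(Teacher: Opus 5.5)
Your proposal is correct and follows the paper's proof essentially step for step. It uses retroactivity via fresh-coin per-round post-processing of $\Live_\ell$, the same swap coupling that puts the two samples at add/remove distance two with group privacy and basic/advanced composition over at most $W$ rounds, and the same mixture view of subsampling with inclusion probability $k/|D'|\le q$ for amplification. The loose end you flag is not actually open: since $1+e^{\eps_0}\le 2e^{\eps_0}$, your per-round $(2\eps_0,(1+e^{\eps_0})\delta_0)$ bound already implies the stated $(2\eps_0,2e^{\eps_0}\delta_0)$, so no sharper argument is needed.
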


\begin{proof}
\emph{Retroactivity.} The sample $Q_\ell$ is a fresh coin function of $\Live_\ell(S)$, which is determined by $\Data_\ell(S)$, and $z_\ell$ is a fresh coin function of $Q_\ell$. Hence the answer block from any round $i$ onward is per round post processing, with fresh coins, of $\{\Data_\ell(S)\}_{\ell=i}^T$, and Lemma~\ref{lem:closure} gives retroactivity.

\emph{Privacy.} As in the privacy proof of Theorem~\ref{thm:linear}, one of the two neighboring sequences, denote it $S^+$ and the other $S^-$, holds one extra surviving copy of a stamped point $(t^*,x^*)$ on a contiguous set of rounds, the data agreeing otherwise; intersected with liveness, this leaves at most $W$ rounds at which $\Live_\ell(S^+)$ exceeds $\Live_\ell(S^-)$ by one copy of $x^*$, the two windows agreeing elsewhere. At every other round the two samples have the same distribution. At each of the at most $W$ differing rounds, a uniformly random subset of $\Live_\ell(S^+)$ of the required size can be drawn by first drawing a uniformly random subset of $\Live_\ell(S^-)$ of the same size and then, with probability $q_\ell$ equal to the sample size divided by $|\Live_\ell(S^+)|$, replacing a uniformly random element of it by the extra copy (or, when the sample is the whole window, adding the extra copy). So the two samples differ by at most one removal and one insertion, and by group privacy of $M$ at distance two the round distributions are $(2\eps_0,2e^{\eps_0}\delta_0)$-indistinguishable. Since all coins are fresh, the two transcripts are product distributions differing in at most $W$ factors, and basic composition, respectively advanced composition \cite{DRV10}, gives the two claimed bounds.

\emph{Amplification.} Under the assumption that $n_\ell\ge n$ for all $\ell$, every sample has size $k$, so at each differing round the coupling above replaces an element with probability $q_\ell=k/(n_\ell(S^-)+1)\le q$, and with the remaining probability the two samples coincide. The round distributions are therefore $(1-q_\ell)P+q_\ell P'$ versus $P$, with $P'$ and $P$ $(2\eps_0,2e^{\eps_0}\delta_0)$-indistinguishable, and such a mixture is $\big(\ln(1+q_\ell(e^{2\eps_0}-1)),\,2q_\ell e^{\eps_0}\delta_0\big)$-indistinguishable from $P$, in both directions, by the standard argument for amplification by subsampling \cite{BBG18}. Composing over the at most $W$ differing rounds as before gives the claim; the bound $\eps_q\le7q\eps_0$ for $\eps_0\le1$ follows from $e^{2\eps_0}-1\le(e^2-1)\eps_0$ by convexity.
\end{proof}

The amplified privacy of part 3 holds only under the occupancy promise that $n_\ell\ge n$ for all $\ell$. The next lemma removes it: the algorithm privately tests, using the machinery of Section~\ref{sec:linear}, whether the window is large enough, and releases $\bot$ when it is not.

\begin{lemma}[Removing the occupancy promise]\label{lem:gate}
Assume, as in part 3 of Theorem~\ref{thm:sampling}, that $\eps_0\le1$ and $n\ge k$, let $q=k/(n+1)$, and let $\eps_{W,q}$ and $\delta_{W,q}$ be as defined there. Let $\eps_t,\delta_t>0$, and let $\tilde n_\ell$ be the release of Theorem~\ref{thm:linear} applied to the window count $n_\ell$ with Laplace noise, calibrated to be $(\eps_t,0)$-private.  Define the {\em gated algorithm} that releases $\bot$ at rounds with $\tilde n_\ell<2n$ and $M(Q_\ell)$ at all other rounds.\footnote{So here the algorithm itself depends on the occupancy parameter $n$, unlike in part 3 of Theorem~\ref{thm:sampling}.} Then the gated algorithm is exactly retroactive. Moreover, there is a universal constant $C$ such that if $n\ge C\log^2(T)\log(T/\delta_t)/\eps_t$, then it is $(\eps_t+\eps_{W,q},\,\delta_{W,q}+\delta_t)$-private for \emph{every} pair of neighboring sequences, and at every round with $n_\ell\ge3n$ it releases $M(Q_\ell)$, except with probability $\delta_t$.
\end{lemma}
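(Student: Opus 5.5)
We prove the three claims separately: exact retroactivity, privacy for every neighboring pair, and the non-gating guarantee at rounds with $n_\ell\ge 3n$.

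\emph{Retroactivity.} The window count $n_\ell=|\Live_\ell(S)|$ is a linear statistic of $\Data_\ell(S)$, because liveness depends only on the stamp and the current round. We therefore realize $\tilde n_\ell$ as in Theorem~\ref{thm:linear}: $n_\ell$ plus persistent dyadic Laplace noise whose index set $D(\ell)$ depends only on $\ell$. Concretely, take $F$ to act on stamped values, since the query is round dependent. The retroactivity argument of Theorem~\ref{thm:linear} goes through verbatim, because it uses only that the release is $f_\ell(\Data_\ell(S))$ plus data-independent persistent noise. Hence the block $(\tilde n_i,\dots,\tilde n_T)$ has the same distribution under any $S,S'$ uniting at $i$. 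The samples $Q_\ell$ and the coins of $M$ are fresh per round and depend only on $\Data_\ell(S)$. Conditioned on the noise vector, the gated output at round $\ell$ is therefore a fresh-coin function of $(\tilde n_\ell,\Data_\ell(S))$. By the argument of Lemma~\ref{lem:closure}, the joint block has identical distribution under $S$ and $S'$.

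\emph{Privacy.} We view the transcript as the threshold bits $\mathbf 1[\tilde n_\ell\ge 2n]$ together with the answers $M(Q_\ell)$ on ungated rounds. We compose two mechanisms.
\begin{itemize}
\item The counter transcript $(\tilde n_\ell)_\ell$ is $(\eps_t,0)$-private by Theorem~\ref{thm:linear}.
\item Conditioned on the counter transcript, the sampling answers on ungated rounds must be handled. The difficulty is that part~3 of Theorem~\ref{thm:sampling} needs $n_\ell\ge n$ at every differing round, while the gate only guarantees $\tilde n_\ell\ge2n$.
\end{itemize}
By Theorem~\ref{thm:linear} part~3 and Laplace tails, with $m=O(\log T)$ noise terms each of scale $O(\log T/\eps_t)$, every counter error satisfies $|\tilde n_\ell-n_\ell|\le n$ simultaneously for all $\ell$, in both executions, except with probability $\delta_t$. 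This holds once $n\ge C\log^2 T\log(T/\delta_t)/\eps_t$. On this good event, every ungated round has $n_\ell\ge n$ in both sequences, since neighboring sequences differ by at most one in $n_\ell$.

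We then redo the per-round coupling of Theorem~\ref{thm:sampling}'s amplification step only over ungated differing rounds. Gated rounds output $\bot$ in both executions and cost nothing. Advanced composition over at most $W$ such rounds yields $(\eps_{W,q},\delta_{W,q})$. Combining this with the counter's $\eps_t$ by adaptive composition, and paying $\delta_t$ for the bad event, gives $(\eps_t+\eps_{W,q},\,\delta_{W,q}+\delta_t)$.

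The main obstacle is the bad event. To avoid conditioning on it in a way that depends on data, we define a modified mechanism that outputs $\bot$ whenever the true $n_\ell<n$. This mechanism is private by the argument above, since it only runs $M$ on rounds with $n_\ell\ge n$. It coincides with the gated algorithm on the good event, so the two are within total variation $\delta_t$, which transfers the guarantee. This also settles that the gate decision for the subsampled rounds is chosen adaptively from the counter release.

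\emph{Utility.} On the same good event, $n_\ell\ge3n$ implies $\tilde n_\ell\ge 2n$, so the gated algorithm releases $M(Q_\ell)$ except with probability $\delta_t$.
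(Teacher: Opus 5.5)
Your retroactivity and utility arguments, and the skeleton of your privacy argument, coincide with the paper's. That skeleton is: compose the $(\eps_t,0)$ counter with the amplified bound of part~3 on ungated rounds, and pay $\delta_t$ for counter failure. The gap is in the step you offer to resolve the ``main obstacle''. A mechanism that outputs $\bot$ whenever its \emph{own} true count satisfies $n_\ell<n$ is not private by the argument above. Take a differing round with $n_\ell(S^+)=n$ and $n_\ell(S^-)=n-1$. On $S^-$ it outputs $\bot$ with certainty, while on $S^+$ it outputs $M(Q_\ell)$ whenever $\tilde n_\ell\ge2n$. So its gate leaks the true count at the boundary, and a counter-failure term must be paid inside its own privacy bound too. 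Moreover, even a legitimate hybrid does not give the stated parameters. Take one whose gate set $\{\ell:\min(n_\ell(S^+),n_\ell(S^-))<n\}$ is fixed by the pair. Transferring through total variation on both sides only yields $\Pr[\A(S^+)\in E]\le e^{\eps}\Pr[\A(S^-)\in E]+\delta_{W,q}+(1+e^{\eps})\delta_t$ with $\eps=\eps_t+\eps_{W,q}$, not $\delta_{W,q}+\delta_t$.

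No hybrid is needed, and the worry behind it is unfounded. The mechanism must be data oblivious, but for a fixed neighboring pair the \emph{analysis} may split outcomes by a set that depends on both sequences. Set up the following notation:
\begin{itemize}
\item $P^\pm$ is the law of the counter transcript $z$ under $S^\pm$.
\item $U(z)$ is the set of ungated rounds.
\item $g^\pm(z)$ is the conditional probability of $E$ given $z$. Since the sampling coins are independent of the counter noise, this is the probability of $E$ under the sampling releases on the fixed set $U(z)$.
\item $B$ is the set of $z$ for which some $\ell\in U(z)$ has $\min(n_\ell(S^+),n_\ell(S^-))<n$.
\end{itemize}
Three facts then give the bound:
\begin{itemize}
\item For $z\notin B$, part~3 restricted to $U(z)$ gives $g^+(z)\le e^{\eps_{W,q}}g^-(z)+\delta_{W,q}$.
\item For $z\in B$, both counts are at most $n$ at the offending round while $z_\ell=\tilde n_\ell\ge 2n$. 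This forces a counter error of at least $n$, so $P^+(B)\le\delta_t$ by the counter's error bound (take it to be $n-1$, as the paper does).
\item Pure privacy of the counter gives $\int g^-\,dP^+\le e^{\eps_t}\int g^-\,dP^-$.
\end{itemize}
Hence $\Pr_{S^+}[E]\le\int_{B^c}(e^{\eps_{W,q}}g^-+\delta_{W,q})\,dP^+ + P^+(B)\le e^{\eps_t+\eps_{W,q}}\Pr_{S^-}[E]+\delta_{W,q}+\delta_t$, and symmetrically. This is the paper's ``condition on the count, then add its failure probability'', made explicit. Defining $B$ through the minimum also repairs a small off-by-one in your good event. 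An error bound of $n$ yields $n_\ell\ge n$ only in the execution being analyzed, and merely $n-1$ in its neighbor, whereas part~3 needs the smaller count to be at least $n$.
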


\begin{proof}
The window count is a linear window statistic, so the count release is exactly retroactive by Theorem~\ref{thm:linear} applied to the window histogram, and the gated releases are per round post processing, with fresh coins, of the count release and of $\Live_\ell(S)$; Lemma~\ref{lem:closure} gives retroactivity. The count error is at most $n-1$ at all rounds, except with probability $\delta_t$: the release at each round is a sum of at most $\log_2(T)+1$ Laplace variables of scale $O(\log(T)/\eps_t)$ each, and a union bound over the rounds and the summands bounds all errors by $O(\log^2(T)\log(T/\delta_t)/\eps_t)\le n-1$, by the choice of $C$. 

For privacy, passing the gate at a round $\ell$ requires $\tilde n_\ell\ge2n$, hence $n_\ell\ge2n-(n-1)=n+1$ unless the count fails, which happens with probability at most $\delta_t$ under either sequence. The count release is $(\eps_t,0)$-private. Conditioned on its outcome, the set of passing rounds is fixed. The releases at these rounds are then exactly those of part 3 of Theorem~\ref{thm:sampling}, restricted to the passing rounds, and the per round amplified bound applies at each of them.
Composing the count release with the gated releases, and adding the failure probability of the count, gives the claim. Finally, at a round with $n_\ell\ge3n$ we have $\tilde n_\ell\ge3n-(n-1)\ge2n$ unless the count fails.
\end{proof}

We conclude with a concrete instantiation: privately reporting a point that lies between the minimum and the maximum of the live points, the \emph{interior point} problem, which is the basic primitive underlying private medians, quantiles, and learning of thresholds. Combining the compiler with the gate and with the one shot algorithm of \cite{CLNSS23} yields the following.

\begin{corollary}[A window interior point]\label{cor:window-ipp}
Instantiate $M$ with the one shot interior point algorithm of \cite[Theorem~1.1]{CLNSS23}, run with $\eps_0=1$, and set $k$ to its sample complexity, $k=\widetilde O(\log^*|\X|)\cdot\mathrm{polylog}(1/\delta_0,1/\beta)$. Let $\eps_t,\delta_t,\delta'>0$, and let $n$ be as in Lemma~\ref{lem:gate} (growing as $\mathrm{polylog}(T)/\eps_t$). Then the gated algorithm of Lemma~\ref{lem:gate} is exactly retroactive, is $\big(\eps_t+O(\tfrac{k}{n}\sqrt{W\ln(1/\delta')}+W(\tfrac{k}{n})^2),\,O(\tfrac{k}{n}W\delta_0)+\delta'+\delta_t\big)$-private\footnote{These are the parameters $(\eps_t+\eps_{W,q},\,\delta_{W,q}+\delta_t)$ of Lemma~\ref{lem:gate} with $\eps_0=1$ and $q\le k/n$, after plugging in $\eps_{W,q}$ and $\delta_{W,q}$.} for every pair of neighboring sequences, and at every round with $n_\ell\ge3n$ releases, with probability $1-\beta-\delta_t$, a value between the minimum and the maximum of the live points; at all other rounds it releases either such a value or $\bot$.
\end{corollary}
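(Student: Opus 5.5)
The corollary is a direct instantiation of Lemma~\ref{lem:gate} with the one shot interior point algorithm of \cite[Theorem~1.1]{CLNSS23} as $M$. The proof therefore has three parts: retroactivity, privacy, and utility. Retroactivity is immediate. Lemma~\ref{lem:gate} shows the gated algorithm is exactly retroactive for every base algorithm $M$, since the release at each round is per round post processing, with fresh coins, of the count release and of $\Live_\ell(S)$. Nothing about $M$ beyond its being a fixed randomized map on multisets is used, so I will simply cite the lemma.

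For privacy, I will verify the hypotheses of Lemma~\ref{lem:gate} and then substitute parameters. The algorithm of \cite{CLNSS23} is $(\eps_0,\delta_0)$-DP with $\eps_0=1$, so the assumption $\eps_0\le1$ holds. We take $n\ge C\log^2(T)\log(T/\delta_t)/\eps_t$, and we may also assume $n\ge k$, enlarging $n$ if needed (this only improves $q$). The lemma then gives $(\eps_t+\eps_{W,q},\,\delta_{W,q}+\delta_t)$-privacy for every neighboring pair. To put this in the stated form, I use $q=k/(n+1)\le k/n$ and $\eps_q\le7q\eps_0=O(k/n)$. From $e^{\eps_q}-1=q(e^{2}-1)=O(q)$ we get $W\eps_q(e^{\eps_q}-1)=O(W(k/n)^2)$. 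The other term is $\eps_q\sqrt{2W\ln(1/\delta')}=O(\tfrac kn\sqrt{W\ln(1/\delta')})$, and $\delta_{W,q}=2Wqe\,\delta_0+\delta'=O(\tfrac kn W\delta_0)+\delta'$. This matches the footnoted parameters.

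For utility, fix a round $\ell$ with $n_\ell\ge3n$. Lemma~\ref{lem:gate} shows that, except with probability $\delta_t$, the gate passes and the algorithm releases $M(Q_\ell)$. Since $n_\ell\ge3n\ge k$, the sample $Q_\ell$ is a uniformly random subset of $\Live_\ell(S)$ of size exactly $k$. The sample complexity guarantee of \cite{CLNSS23} then says that on any dataset of size $k$, $M$ outputs, with probability $1-\beta$, a point between the minimum and maximum of $Q_\ell$. Because $Q_\ell\subseteq\Live_\ell(S)$, that interval is contained in $[\min\Live_\ell,\max\Live_\ell]$. A union bound gives success probability $1-\beta-\delta_t$.

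At other rounds there are two cases. If the gate fails, the output is $\bot$. If the gate passes, then $M$ is run on a sample of size $\min\{k,n_\ell\}$, which is again a valid interior point, with the same probability, whenever the sample has size at least $k$. The one step needing care is the event that the gate passes while $n_\ell<k$. I would exclude it via the count accuracy: when the count succeeds, passing the gate forces $n_\ell\ge n+1>k$. So the ``either such a value or $\bot$'' clause holds outside the $\delta_t$ and $\beta$ failure events, and the claim should be read with that qualification. This interior-point case, together with the parameter bookkeeping for $\eps_q$ at $\eps_0=1$, is the only place I expect to need any care.
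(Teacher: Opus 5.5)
Your proposal is correct and follows the paper's proof. Retroactivity and privacy are read off Lemma~\ref{lem:gate} with $\eps_0=1$ and $q\le k/n$. Utility uses two facts: a passing gate forces $n_\ell\ge n+1>k$, so $|Q_\ell|=k$; and an interior point of $Q_\ell\subseteq\Live_\ell(S)$ is an interior point of $\Live_\ell(S)$.

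Your explicit parameter bookkeeping (e.g.\ $e^{\eps_q}-1=q(e^2-1)$) spells out what the paper's short proof leaves implicit. So does your reading of the ``other rounds'' clause as holding outside the $\delta_t$ and $\beta$ failure events.
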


\begin{proof}
Retroactivity and privacy follow from Lemma~\ref{lem:gate} with $\eps_0=1$ and $q\le k/n$. Whenever the gate passes, the sample has size $k$ (unless the count fails, passing implies $n_\ell\ge n+1>k$), and an interior point of $Q_\ell\subseteq\Live_\ell(S)$ is an interior point of $\Live_\ell(S)$, so $M$ succeeds with probability $1-\beta$; at rounds with $n_\ell\ge3n$ the gate passes except with probability $\delta_t$.
\end{proof}

\begin{remark}[Persistent ranks]\label{rem:persistent}
A natural variant attaches to each addition, upon arrival, a persistent secret rank drawn uniformly from $[0,1]$, and takes as $Q_\ell$ the $\min\{k,n_\ell\}$ live copies of smallest rank. The sample then changes only with the window's actual turnover, by at most one swap per arrival, expiry, or deletion. This can be helpful when the releases are used to estimate change over time, and when $M$ is expensive and is maintained incrementally on the sample. This variant is also exactly retroactive, and it satisfies part 2 (privacy) of Theorem~\ref{thm:sampling}. Its amplification guarantees, however, are weaker and somewhat more complicated. Appendix~\ref{app:amplification} gives the precise statements and proofs.
\end{remark}

\section{On the price of retroactivity}\label{sec:separation}

In this section we prove a negative result on the cost of retroactivity when enforced on top of privacy. Specifically, we show that retroactivity forbids instance adaptive error bounds for the {\em counting distinct elements} problem, of the type studied by Jain et al.~\cite{JKRSS23}.
Admittedly, our negative result provides only an instance adaptive separation, rather than a worst case separation, which we leave as an open question.

\paragraph{The CountDistinct task.}
Let $\X$ be a large universe and suppose there is at most one update per round, as in Subsection~\ref{subsec:clustering}. The functionality of interest is the number of distinct surviving values, i.e., for a multiset of stamped points $D$,
\[
\mathrm{CD}(D)=|\{x\in\X: \exists t \text{ such that } (t,x)\in D\}|.
\]

\begin{definition}[\cite{JKRSS23}]
For a sequence $S$ and a value $x\in\X$, let $p_x(\ell)\in\{0,1\}$ indicate whether $x$ is present in $\Data_\ell(S)$, with $p_x(0)=0$. The \emph{flippancy} of $x$ is the number of rounds at which the indicator $p_x$ changes, and the \emph{maximum flippancy} $w(S)$ is the largest flippancy of any value.
\end{definition}

Note that $w(S)$ is determined by the trajectory $\{\Data_\ell(S)\}_{\ell=1}^T$, and that every insertion only sequence has maximum flippancy $1$. We say a dynamic algorithm is \emph{$(\alpha,\beta)$-accurate per round} on a class of sequences if for every sequence $S$ in the class and every round $\ell$, $\Pr[|a_\ell-\mathrm{CD}(\Data_\ell(S))|>\alpha]\le\beta$.

Jain et al.~\cite{JKRSS23} presented a continually-DP\footnote{The algorithm of Jain et al.~\cite{JKRSS23} guarantees a stronger notion of privacy, called item-level (continual) DP, allowing all events of a single item to change at once. Their guarantee implies the event level notion of privacy studied in this paper.} mechanism for this problem that adapts to the data it actually sees by tracking how frequently items change (the flippancy, $w(S)$). It increases its noise scale only when these changes double, guaranteeing an error of
\begin{equation}\label{eq:Jain}
\widetilde{O}\Big(\big(\sqrt{w(S)}\log T+\log^3T\big)\cdot\sqrt{\log(1/\delta)}\,/\,\eps\Big)    
\end{equation}
on \emph{every} stream, with no prior bound on $w(S)$ (Theorem 1.5 in \cite{JKRSS23}). Because the error scales with actual changes, streams with very few changes (maximum flippancy $O(1)$) have a much smaller, polylogarithmic error.

However, their mechanism is not retroactive. The amount of noise it adds is governed by the flippancy of the entire history, including flips of copies long since deleted, which is not a function of the surviving trajectory. The theorem below shows this is not an accident of their design, but a strict limitation of retroactivity itself.

We first record what retroactivity remembers, or rather what it is required to forget.

\begin{lemma}[Amnesia]\label{lem:amnesia}
Let $\A$ be a retroactive dynamic algorithm. Then for every round $\ell$, the distribution of $a_\ell$ on input $S$ depends only on $\ell$ and on the stamped multiset $\Data_\ell(S)$. Consequently, if $\A$ is $(\alpha,\beta)$-accurate per round on insertion only sequences, then it is $(\alpha,\beta)$-accurate per round on all sequences.
\end{lemma}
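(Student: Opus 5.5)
The plan is to compare an arbitrary sequence $S$ with a canonical sequence that realizes the same current data at round $\ell$, and then use Definition~\ref{def:retroactivity} at the single round $\ell$. Fix a round $\ell$ and an input sequence $S$, and let $D=\Data_\ell(S)$. Build $S^{\ell}$ as follows: for each time $t\le\ell$, round $t$ of $S^{\ell}$ contains exactly the additions $(\add,x)$, one for each copy of $(t,x)\in D$, and it contains no deletions. All rounds after $\ell$ are copied verbatim from $S$, so $u^{\ell}_j=u_j$ for $j>\ell$. Then $S^{\ell}$ depends on $S$ only through $D$ and the suffix after $\ell$, and $\Data_\ell(S^{\ell})=D$ holds by construction.

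Next, verify that $S$ and $S^{\ell}$ unite at time $\ell$. We already have agreement at round $\ell$. For $j>\ell$, both processes start round $\ell+1$ from the same multiset $D$ and then apply identical updates $u_{\ell+1},\dots,u_j$. The process defining $\Data_j$ is deterministic given the current multiset and the incoming updates. One point needs care: a deletion $(\del,t',x)$ at a round $j>\ell$ either names $(t',x)$, which is present in both multisets with equal multiplicity, or names something absent from both, so in either case it behaves identically. By induction on $j$, $\Data_j(S)=\Data_j(S^{\ell})$ for all $j\ge\ell$. Retroactivity then gives $[\A(S)]_\ell^T\equiv[\A(S^{\ell})]_\ell^T$, and projecting to the first coordinate gives $a_\ell(S)\equiv a_\ell(S^{\ell})$.

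The main obstacle is that $S^{\ell}$ still carries the suffix of $S$. To conclude that the law of $a_\ell$ depends only on $(\ell,D)$, the future updates have to be shown irrelevant. If $\A$ is online, meaning $a_\ell$ is produced from $u_1,\dots,u_\ell$ and the internal coins only, as the definition of a dynamic algorithm intends, then the law of $a_\ell$ on $S^{\ell}$ equals its law on the truncation of $S^{\ell}$ with empty rounds after $\ell$. That truncation is a function of $(\ell,D)$ alone. If I did not want to rely on online-ness, I would not truncate; instead I would fix the suffix to be empty from the start, replacing $S$ by $S$ truncated after $\ell$, which leaves $a_\ell$ unchanged by causality. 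Either way the claim reduces to this causality property, so I will state it explicitly as the property being used.

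For the consequence, take any sequence $S$ and any round $\ell$. The canonical truncated sequence $S^{\ell}$ built above is insertion only. Its round-$\ell$ data equals $\Data_\ell(S)$, so $\mathrm{CD}(\Data_\ell(S))=\mathrm{CD}(\Data_\ell(S^{\ell}))$. By the first part, $a_\ell$ has the same distribution under $S$ and $S^{\ell}$, so
\[
\Pr\big[|a_\ell(S)-\mathrm{CD}(\Data_\ell(S))|>\alpha\big]=\Pr\big[|a_\ell(S^{\ell})-\mathrm{CD}(\Data_\ell(S^{\ell}))|>\alpha\big]\le\beta.
\]
The same argument works for any functionality of the current data, not just $\mathrm{CD}$.
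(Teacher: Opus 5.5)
Your proof is correct and follows the paper's argument. The paper splices the first $\ell$ updates of an arbitrary $S'$ with $\Data_\ell(S')=\Data_\ell(S)$ onto the suffix of $S$ and applies causality and retroactivity at round $\ell$; you splice the canonical insertion-only prefix onto the suffix of $S$ and then truncate by causality, which uses the same two ingredients. Your explicit check that identical later updates applied to equal multisets keep the data equal, and your note that causality (online-ness) is genuinely needed, spell out points the paper leaves implicit.
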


\begin{proof}
To see that the output distribution at round $\ell$ depends only on the current dataset $\Data_\ell(S)$, consider two sequences $S$ and $S'$ that result in the exact same state at round $\ell$, i.e., $\Data_\ell(S)=\Data_\ell(S')$. We can bridge them using a hybrid sequence $S_{\mathrm{mid}}$ that takes its first $\ell$ updates from $S'$ and its remaining updates from $S$. Now,
\begin{itemize}
    \item Since the input sequences $S'$ and $S_{\mathrm{mid}}$ are identical up to time $\ell$, by causality, we have that $a_\ell(S')\equiv a_\ell(S_{\mathrm{mid}})$.
    \item Since the states of $S$ and $S_{\mathrm{mid}}$ agree from round $\ell$ onward, by retroactivity we have that $a_\ell(S)\equiv a_\ell(S_{\mathrm{mid}})$.
\end{itemize}
This shows that the output depends only on the current data state. For the consequence, the state of any sequence at round $\ell$ is replicated by an insertion only sequence, placing at every round $t\le\ell$ exactly the additions of the copies of $\Data_\ell(S)$ carrying stamp $t$. The output distributions at time $\ell$ are thus equal, and so accuracy on the insertion-only sequence implies accuracy on the insertion-deletion sequence.
\end{proof}

We now leverage Lemma~\ref{lem:amnesia} to show our impossibility result. At a high level, our impossibility result is obtained as follows. Suppose towards contradiction that there is a private and retroactive algorithm $\A$ for $\mathrm{CountDistinct}$, with an input adaptive error bound depending on the flippancy $w(S)$, scaling as $\mathrm{poly}(w(S),\log T)$, in the spirit of Equation~(\ref{eq:Jain}). So, on an insertion-only sequence $S$ (with flippancy 1) it guarantees a small error of ${\rm polylog}(T)$. Then, by Lemma~\ref{lem:amnesia}, it also guarantees this small error on {\em any} input sequence. This violates an impossibility result of Jain et al.~\cite{JKRSS23}, showing that any (event level) DP algorithm for $\mathrm{CountDistinct}$, even non-retroactive, must have worst case error $\Omega(T^{1/4})$.

A technical issue with formalizing this proof sketch is that, as stated, the results of Jain et al.~\cite{JKRSS23} only rule out DP algorithms that guarantee {\em simultaneous} accuracy across all time steps together. That is, algorithms guaranteeing that with constant probability {\em all} of their answers are accurate simultaneously. In contrast, Lemma~\ref{lem:amnesia} only guarantees {\em per-round marginal} accuracy. Fortunately, straightforward modifications to the negative result of Jain et al.~\cite{JKRSS23} lift it also to marginally-accurate DP algorithms, as captured in the following theorem (the proof is given in Appendix~\ref{app:reconstruction} for completeness). 

\begin{theorem}[Marginal worst case bound \cite{JKRSS23}]\label{thm:marginal-lb}
There are constants $c,\beta^*,\delta^*>0$ such that for all sufficiently large $T$, no dynamic algorithm for $\mathrm{CountDistinct}$ that is $(1,\delta^*)$-private (Definition~\ref{def:privacy}) is $(c\,T^{1/4},\beta^*)$-accurate per round on the class of all input sequences.
\end{theorem}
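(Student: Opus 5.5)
We follow the reconstruction-attack route of Jain et al.~\cite{JKRSS23}, which combines a lower bound for answering inner-product queries on a secret vector with a reduction from those queries to CountDistinct with deletions. We modify only the final decoding step, so that it uses per-round marginal accuracy instead of simultaneous accuracy.

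\textbf{Step 1 (secret and queries).} Fix $n=\Theta(\sqrt T)$ and a secret vector $y\in\{0,1\}^n$. We encode $y$ by inserting, in the first $n$ rounds, the element $x_j$ whenever $y_j=1$. The remaining $T-n$ rounds are split into $\Theta(\sqrt T)$ phases, each of length $O(\sqrt T)$. Each phase $r$ is associated with a random query vector $q^{(r)}\in\{0,1\}^n$, drawn independently of the algorithm. During phase $r$ we insert, for each $j$ with $q^{(r)}_j=0$, an auxiliary copy of $x_j$ with a fresh stamp, and at the end of the phase we delete all of these auxiliary copies. At the round just after the insertions of phase $r$, the number of distinct surviving values equals
\[
(n-|q^{(r)}|)+\langle q^{(r)},y\rangle .
\]
So each CountDistinct answer at that designated round, minus the public term $n-|q^{(r)}|$, is an answer to the inner-product query $\langle q^{(r)},y\rangle$.

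\textbf{Step 2 (privacy transfer).} Two secrets $y,y'$ at Hamming distance one give input sequences that differ in exactly one addition in the encoding prefix, so they are neighboring. Hence $(1,\delta^*)$-privacy of $\A$ implies that the map from $y$ to the transcript of answers at the designated rounds is $(1,\delta^*)$-DP with respect to $y$.

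\textbf{Step 3 (decoding from marginal accuracy).} This is the main obstacle. Per-round accuracy gives, for each phase $r$ separately, $|a_r-\langle q^{(r)},y\rangle|\le\alpha$ with probability at least $1-\beta^*$, where $\alpha=cT^{1/4}=c'\sqrt n$. The probability here is over both the algorithm's coins and the query $q^{(r)}$, and accuracy holds for every fixed stream, hence also averaged over the random queries.

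By linearity of expectation, the expected fraction of phases with an inaccurate answer is at most $\beta^*$. By Markov's inequality, with probability at least $1/2$ at least a $1-2\beta^*$ fraction of the $\Theta(n)$ query answers are $\alpha$-accurate. The standard robust decoder (the Dinur--Nissim / De style least-$\ell_1$ / LP decoding of Dwork--McSherry--Talwar) tolerates a small constant fraction of arbitrarily wrong answers when the queries are random $\{0,1\}$ vectors. Applied to these answers, it outputs $\hat y$ agreeing with $y$ on a $0.99$ fraction of coordinates, provided $c$ and $\beta^*$ are small constants. So we need only this averaged statement, and no simultaneous guarantee.

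\textbf{Step 4 (contradiction).} For uniform $y$, any $(1,\delta^*)$-DP algorithm can reconstruct only a fraction at most about $e/(1+e)+\delta^*<0.99$ of the coordinates in expectation. This follows by comparing, for each coordinate $j$, the transcript under $y$ with the transcript under $y$ with bit $j$ flipped. Choosing $\delta^*$ small, this contradicts the $0.99$ reconstruction guarantee of Step 3, which holds with probability at least $1/2$.

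The delicate points are two. First, Step 3 must be made quantitative with the stated constants and the $\sqrt n$ error, using the anti-concentration/discrepancy property of random $0/1$ query matrices. Second, the schedule must respect one update per round and horizon $T$: this requires $n+\Theta(\sqrt T)\cdot O(\sqrt T)\le T$, which sets the constants.
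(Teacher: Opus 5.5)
\textbf{Gap in Steps 3--4.} Your reduction matches the paper's. You encode $y$ by additions in the first $n$ rounds, then insert and later delete a block of auxiliary copies for each of $\Theta(n)$ queries. You read an inner product off the CountDistinct answer at the end of each insertion half, apply Markov to the per-round failure probabilities, run the Dwork--McSherry--Talwar LP decoder, and contradict the per-bit bound for $(1,\delta^*)$-DP. Your complement encoding, which inserts the $j$ with $q_j=0$, is a nice touch: a single round's answer becomes $\langle q,y\rangle$ plus a public term.

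As written, however, Steps 3 and 4 do not yield a contradiction. Markov at threshold $2\beta^*$ only says the decoder succeeds with probability $1/2$. On failure its output is arbitrary. So the expected fraction of correctly reconstructed bits is only guaranteed to be at least $\tfrac12\cdot 0.99\approx 0.495$. The privacy bound you invoke is at least $e/(1+e)\approx 0.731$ for every $\delta^*\ge 0$, so choosing $\delta^*$ small cannot help. You need a success probability $p$ with $0.99\,p>e/(1+e)+\delta^*$, for instance $p\ge 0.9$. Get it by applying Markov at the decoder's corruption tolerance $\gamma$ rather than at $2\beta^*$: $\Pr[\text{fraction of bad answers}>\gamma]\le\beta^*/\gamma$, then choose $\beta^*\le\gamma/10$. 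This is what the paper does: it requires $10\beta^*<\gamma$, gets success probability $4/5$, and obtains expected agreement $\ge 0.792>0.732+\delta^*/2$ with $\delta^*=0.1$. (A Fano-type mutual-information argument might still contradict DP at success probability $1/2$, but that is not the per-bit expectation argument you use.)

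\textbf{Query ensemble.} The paper invokes the robust LP-decoding guarantee for random $\pm1$ queries, not random $\{0,1\}$ ones. A $\{0,1\}$ matrix carries a rank-one mean component, so you need either a separate argument or a conversion: $\langle 2q-\mathbf 1,y\rangle=2\langle q,y\rangle-|y|$, with $|y|$ estimated by the answer $a_n$ at round $n$. The conversion costs a union bound, giving error $3\alpha$ with probability $1-2\beta^*$, which is exactly the paper's construction with $\pm1$ queries.

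Drawing the queries at random, rather than fixing a good query set in advance as the paper does, is harmless. The decoding guarantee holds with high probability over the queries simultaneously for all $y$ and all corruption patterns, and per-round accuracy holds for each fixed stream.
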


We are now ready to state and prove our impossibility result.

\begin{theorem}[Retroactivity forbids instance adaptive error]\label{thm:countdistinct}
There are constants $c,\beta^*,\delta^*>0$ such that the following holds for all sufficiently large $T$. Let $\A$ be a dynamic algorithm for $\mathrm{CountDistinct}$ that is $(1,\delta^*)$-private (Definition~\ref{def:privacy}) and retroactive (Definition~\ref{def:retroactivity}), and suppose that its error is bounded in terms of the maximum flippancy: for some function $f$, for every input sequence $S$ and every round $\ell$,
\[
\Pr\big[|a_\ell-\mathrm{CD}(\Data_\ell(S))|>f(w(S))\big]\;\le\;\beta^*.
\]
Then $f(1)\ge c\,T^{1/4}$. 
\end{theorem}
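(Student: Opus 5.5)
The plan is a reduction to Theorem~\ref{thm:marginal-lb} via Lemma~\ref{lem:amnesia}. I take the constants $c,\beta^*,\delta^*$ to be exactly those of Theorem~\ref{thm:marginal-lb}. Let $\A$ be $(1,\delta^*)$-private, retroactive, and satisfy the flippancy-based per-round bound with some function $f$. Suppose, toward a contradiction, that $f(1)<c\,T^{1/4}$.

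First, I restrict the hypothesis to insertion-only sequences. Every insertion-only sequence has maximum flippancy at most $1$: each value's indicator $p_x$ starts at $0$, flips once when $x$ first arrives, and never flips back. The degenerate case where no value ever appears gives $w(S)=0$. I would handle it either by assuming $f$ is monotone, or more cleanly by observing that its data states are also reached by sequences with $w=1$ (add a single dummy point at round $T$). The simplest route is to note that the hypothesis is stated for every $S$, and to apply it only to sequences with $w(S)=1$. So for every insertion-only $S$ with $w(S)=1$ and every round $\ell$, $\Pr[|a_\ell-\mathrm{CD}(\Data_\ell(S))|>f(1)]\le\beta^*$.

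Second, I invoke Lemma~\ref{lem:amnesia}. For an arbitrary sequence $S$ and a round $\ell$, the lemma gives an insertion-only sequence $S_\ell$ with $\Data_\ell(S_\ell)=\Data_\ell(S)$. It is built by placing at each round $t\le\ell$ the additions of the copies carrying stamp $t$. I may append one extra addition at round $T$ if needed, so that $w(S_\ell)=1$; this does not affect $\Data_\ell$ for $\ell<T$. At $\ell=T$ no extension is needed when the data is nonempty, and when it is empty the algorithm's answer distribution at round $T$ on the empty state can be compared directly.

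Here is the one point to check carefully. The construction must respect the model's "at most one update per round" restriction, while several copies may carry the same stamp. This is not a problem: in the one-update-per-round model every stamp occurs at most once, so $S_\ell$ is valid. By the lemma, $a_\ell$ has the same distribution under $S$ and $S_\ell$, and the true counts agree. Hence $\A$ is $(f(1),\beta^*)$-accurate per round on all input sequences.

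Third, since $\A$ is $(1,\delta^*)$-private and $f(1)<c\,T^{1/4}$, it is in particular $(c\,T^{1/4},\beta^*)$-accurate per round on all sequences. This contradicts Theorem~\ref{thm:marginal-lb} for all sufficiently large $T$, so $f(1)\ge c\,T^{1/4}$.

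The main obstacle is not the logic, which is short, but the bookkeeping in the second step. I need the replicating insertion-only sequence to lie in the model class, meaning single updates per round and flippancy exactly $1$ so that the $f(1)$ bound applies. I would handle this with the stamp-uniqueness observation and the padding trick described above.
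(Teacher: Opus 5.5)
Your proposal is correct and is the paper's argument: you restrict the flippancy bound to insertion-only sequences, transfer it to all sequences via Lemma~\ref{lem:amnesia}, and contradict Theorem~\ref{thm:marginal-lb}. Your bookkeeping is more careful than the paper, which silently treats empty states as having flippancy $1$: you pad with a round-$T$ addition so that $w=1$, and you note that one update per round makes the replica a valid sequence. The one loose end you wave at, round $T$ with empty data, cannot be handled by padding, since any sequence with empty data at round $T$ has $w=0$ or $w\ge2$; it is harmless anyway, because the reduction proving Theorem~\ref{thm:marginal-lb} only queries rounds $n$ and $m_j<T$ (alternatively, run the argument at horizon $T-1$).
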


In words, Theorem~\ref{thm:countdistinct} states that the instance-adaptive bound is already at worst-case level on the easiest instances. In contrast, the mechanism of \cite{JKRSS23}, which is private but not retroactive, satisfies the above with $f(w)=\widetilde O\big((\sqrt w\log T+\log^3T)\sqrt{\log(1/\delta)}/\eps\big)$, which is polylogarithmic at $w=1$. 

\begin{proof}[Proof of Theorem~\ref{thm:countdistinct}]
Every insertion only sequence has maximum flippancy $1$, so $\A$ is $(f(1),\beta^*)$-accurate per round on insertion only sequences. By Lemma~\ref{lem:amnesia}, the answer distribution at any round of any sequence coincides with the answer distribution at the same round of an insertion only sequence realizing the same state, so $\A$ is $(f(1),\beta^*)$-accurate per round on all sequences. Theorem~\ref{thm:marginal-lb} then gives $f(1)>c\,T^{1/4}$.
\end{proof}

\begin{remark}[Forced budget amnesia]\label{rem:amnesia}
The result gives precise content to an intuition one might call \emph{budget amnesia}. Instance-adaptive continual mechanisms, such as the flippancy tracker of \cite{JKRSS23}, keep books: a data dependent account of the input's past activity, which lets them spend noise only on the activity that actually occurred. Lemma~\ref{lem:amnesia} says a retroactive algorithm cannot keep such books: its answer distribution at each round is required to forget every flip that did not survive into the current data. The bookkeeping is not merely reset by deletions; it is forbidden by the definition, and Theorem~\ref{thm:countdistinct} is the price.
\end{remark}

{\small
\bibliographystyle{alpha}

\newcommand{\etalchar}[1]{$^{#1}$}

}

\appendix

\section{A model with retroactive insertions}\label{app:insertions}

We describe the extension of the model of Section~\ref{sec:model} in which additions, too, may reach into the past, recovering the full update generality of \cite{DIL07}. An update is now either an addition $(\add,t,x)$, which retroactively inserts the point $x\in\X$ at time $t$, or a deletion $(\del,t,x)$ as before. The model of Section~\ref{sec:model} is the special case in which every addition arriving at time step $i$ carries the stamp $t=i$.

The current data $\Data_i(S)$ is again the multiset defined by processing $u_1,\dots,u_i$ in \emph{arrival} order, starting from the empty multiset. For every $u_j$: first, for every $(\add,t,x)\in u_j$ with $t\le j$, add a copy of $(t,x)$ to $\Data_i(S)$ (additions with $t>j$, naming a future time, are ignored); then, for every $(\del,t,x)\in u_j$, if $(t,x)\in\Data_i(S)$ then delete one copy of $(t,x)$ from $\Data_i(S)$ (and otherwise do nothing). As in Remark~\ref{rem:model}, the effect of an update is determined at its arrival and updates are never revisited; in particular, a deletion arriving before the addition it names is a permanent no-op, even if a matching addition arrives later.

The neighboring relation extends accordingly, still at the granularity of a single operation.

\begin{definition}[Neighboring sequences, with retroactive insertions]\label{def:neighboring-app}
Input sequences $S,S'$ are \emph{neighboring} if one of them can be obtained from the other by inserting a single update into some $u_j$: either an addition $(\add,t,x)$ with $t\le j$, or a deletion $(\del,t,x)$.
\end{definition}

Definitions~\ref{def:privacy}, \ref{def:uniting}, \ref{def:retroactivity}, and~\ref{def:main}
extend verbatim to the richer update language. At time $i$, after applying all deletions
and retroactive insertions seen so far, the revised state of a past time $\ell\le i$ is
\[
\mathrm{RevData}_{i,\ell}(S)
=
\{(t,x)\in\Data_i(S):t\le\ell\}.
\]
Thus, partial retroactivity queries only the current state $\Data_i(S)$, whereas full
retroactivity may also query any revised past state $\mathrm{RevData}_{i,\ell}(S)$; both are therefore analyzed under the same privacy definition. The main additional subtlety concerns uniting. A retroactive insertion does not affect any state before the round in which it arrives, but once it arrives, it changes every revised past state whose time is at least its time-stamp. Hence, it is not enough for two histories to contain the same current points: they must also agree on the revised
time-stamps of all surviving retroactive insertions. For example, if the same surviving point has stamp $6$ in one history and stamp $8$ in the other, then their revised states differ for every $6\le\ell<8$, even though their current sets may be identical.

\section{Adaptive privacy}\label{sec:adaptive}

Definition~\ref{def:privacy} treats the input stream as fixed in advance: the stream is oblivious to the answers the algorithm releases. In this appendix we strengthen the definition to streams chosen adaptively as a function of past answers, in the spirit of the adaptive continual release model of Jain et al.~\cite{JRSS23}, adapted to our update terminology and to our single operation neighboring relation. The adaptive adversary supplies the regular updates in every round and, once during the game, designates a challenge update: a single addition or a single deletion. The challenge is present in one world and absent in the other, and privacy requires that the adversary's view be insensitive to which world it is playing in.

Formally, for a dynamic algorithm $\A$, an adversary $\B$, and a bit $b\in\{0,1\}$, consider the following game, denoted $\mathsf{AdaptGame}_b(\A,\B)$.
\begin{enumerate}[itemsep=1pt,topsep=3pt]
\item For $i=1,2,\dots,T$:
\begin{enumerate}[itemsep=1pt,topsep=1pt]
\item Based on $(a_1,\dots,a_{i-1})$ and its internal randomness, the adversary $\B$ outputs a multiset $u_i$ of updates. In addition, $\B$ may output one of the following two declarations, each of which may be made at most once throughout the entire game:
\begin{enumerate}[itemsep=1pt,topsep=1pt]
\item a \emph{challenge addition} $(\add,x^*)$ for a point $x^*\in\X$ of its choice; or
\item a \emph{challenge deletion} $(\del,t^*,x^*)$ naming a round $t^*<i$ and a point of its choice.
\end{enumerate}
\item If $b=0$, let $\tilde{u}_i$ be $u_i$ together with the challenge update declared in this round (if any); if $b=1$, let $\tilde{u}_i=u_i$.\footnote{In world $b=1$ the challenge update is simply omitted; the challenge deletion may name a regular addition, or nothing at all.}
\item Algorithm $\A$ receives $\tilde{u}_i$ and returns an answer $a_i$, which is given to $\B$.
\end{enumerate}
\item The view of the adversary is $\mathsf{View}_b(\A,\B)=\big(\text{the internal randomness of }\B,\ a_1,\dots,a_T\big)$.
\end{enumerate}

\begin{definition}[$(\eps,\delta)$-adaptive privacy]\label{def:adaptive-privacy}
A dynamic algorithm $\A$ is \emph{$(\eps,\delta)$-adaptively private} if for every adversary $\B$ and every event $E$,
\[
\Pr\big[\mathsf{View}_0(\A,\B)\in E\big]\;\le\;e^{\eps}\cdot\Pr\big[\mathsf{View}_1(\A,\B)\in E\big]+\delta,
\]
and symmetrically with the roles of the two worlds exchanged.
\end{definition}

\begin{definition}[Adaptively private retroactive algorithm]\label{def:main-adaptive}
A dynamic algorithm $\A$ is an \emph{$(\eps,\delta)$-adaptively-private retroactive algorithm} if it is $(\eps,\delta)$-adaptively private (Definition~\ref{def:adaptive-privacy}) and retroactive (Definition~\ref{def:retroactivity}).
\end{definition}

\section{The persistent ranks variant}\label{app:amplification}

In the standard sampling compiler of Theorem~\ref{thm:sampling}, the algorithm draws a fresh, independent sample at every round. In this appendix, we explain the persistent-ranks variant of Remark~\ref{rem:persistent}, using the notation of Subsection~\ref{subsec:sampling}. In this variant, each addition receives a single secret rank upon arrival, which dictates its inclusion in the sample for its entire lifespan. This creates strict temporal correlation: an extra item's inclusion is no longer an independent coin flip at each round, meaning standard per-round amplification breaks down. To resolve this, the analysis shifts from a per-round perspective to a window-level analysis. The goal is to show that, during the window in which the two executions differ on this item, the entire transcript is affected only if its persistent rank falls in a low-probability range. For deletions, a cyclic rank-shifting argument lets us reduce the analysis to a single additional random rank, rather than having to track a changing collection of affected items over time.

The algorithm $\A^{\mathrm{rank}}_{M,k,W}$ attaches to each addition, upon arrival, an independent rank drawn uniformly from $[0,1]$, kept secret and persistent; when a deletion removes one of several copies of a stamped point, the copy instantiated last is removed. At every round $\ell$ it forms $Q_\ell$, the multiset of values of the $\min\{k,n_\ell\}$ live copies of smallest rank, and releases $M(Q_\ell)$ with fresh independent coins. Let $M$ be $(\eps_0,\delta_0)$-differentially private, and let $(\eps_W,\delta_W)$ be as in Theorem~\ref{thm:sampling}.

\paragraph{Retroactivity.}
Fix $S,S'$ uniting at time $i$. The block of samples $(Q_\ell)_{\ell\ge i}$ is a randomized function of $\{\Data_\ell(S)\}_{\ell=i}^T$, and is otherwise independent of $S$: the sample at round $\ell$ is determined by $\Live_\ell(S)$, itself determined by $\Data_\ell(S)$, and by the ranks of its elements, which are i.i.d.\ uniform variables (regardless of the rounds at which they were instantiated).\footnote{Nitpicking: when deleting one of multiple copies of the same stamped point, in which case the algorithm has freedom in choosing which point to delete (see Remark~\ref{rem:model}), this choice must not depend on the ranks for them to remain independent. Deleting points by instantiation order, as our algorithm does, works.} So, as the two trajectories coincide from $i$ on, the sample blocks during the two executions are identically distributed, and thus so are the answer blocks, which are per round post processing of them (see Lemma~\ref{lem:closure}).

\paragraph{Privacy.}
Let $S,S'$ be neighboring. As in the privacy proof of Theorem~\ref{thm:linear}, one of the two neighboring sequences holds one extra surviving copy of a fixed stamped point $(t^*,x^*)$ on a contiguous set of rounds, the data agreeing otherwise. This extra point may influence the outcome only during a window of at most $W$ rounds, after which it expires. Now, for every fixing of the ranks and for every round in this window, the samples $Q,Q'$ during the two executions may differ by at most one addition and one removal of stamped points. (This is because the extra stamped point, with its rank, may enter the sample and push another out of it.) 
By group privacy of $M$ at distance two, throughout the window, the per round distributions are $(2\eps_0,\,2e^{\eps_0}\delta_0)$-indistinguishable. Basic composition across the window gives $(2W\eps_0,2We^{\eps_0}\delta_0)$-indistinguishability, while advanced composition \cite{DRV10} gives $(\eps_W,\delta_W)$. As this holds for every fixing of the ranks, this also holds without the fixing.

\paragraph{Amplification.}
The secrecy of the ranks improves these parameters on windows that are much larger than the sample. The following theorem quantifies this improvement.

\begin{theorem}\label{thm:amplification}
Let $M$ be $(\eps_0,\delta_0)$-differentially private, let $n\ge k$ and $\beta_r\in(0,1)$, and set $\bar p=\min\big(1,\,(2k+8\ln(W/\beta_r))/n\big)$. Assume $\eps_W\le1$ and $\bar p\le1/2$. Then for every pair of neighboring sequences in which both satisfy $n_\ell\ge n$ for all $\ell$, the transcripts of $\A^{\mathrm{rank}}_{M,k,W}$ are $(12\bar p\,\eps_W,\;3\bar p\,\delta_W+\beta_r)$-indistinguishable.
\end{theorem}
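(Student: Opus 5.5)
We prove the bound by a coupling over ranks and then a mixture argument at the level of the whole differing window, not round by round. Let $S^+$ denote the sequence holding the extra copy $(t^*,x^*)$ and $S^-$ the other; the two sequences differ only on a contiguous window $J$ of at most $W$ rounds. For an addition challenge, both executions share all ranks, and $S^+$ additionally draws a rank $r^*$ for the extra copy. For a deletion challenge we first use the cyclic rank-shifting argument mentioned above. The deletion removes the last-instantiated copy, so it suffices to relabel ranks so that the situation again reduces to one extra copy carrying a single uniform rank $r^*$, independent of all other ranks. Conditioned on all other ranks $R$, the transcripts of the two executions coincide whenever the extra copy never enters the sample during $J$. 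That event holds exactly when $r^*$ exceeds
\[
\theta(R)=\max_{\ell\in J}\big(\text{the $k$-th smallest rank among }\Live_\ell(S^-)\big).
\]

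The key step is to show that $\Pr[r^*<\theta(R)]\le\bar p$, except for a set of $R$ of probability at most $\beta_r$. At each round $n_\ell\ge n$, so the $k$-th smallest rank among at least $n$ i.i.d.\ uniforms is at most $(k+4\ln(W/\beta_r)+\dots)/n$, except with probability $\beta_r/W$, by a Chernoff bound on the count of ranks below a threshold. A union bound over the at most $W$ rounds in $J$ then gives $\theta(R)\le\bar p$ with probability at least $1-\beta_r$; the constant $2k+8\ln$ absorbs the multiplicative Chernoff slack. I expect this step to be the main obstacle. The window rolls and points expire, so the live sets change, and the ranks of the points in $\Live_\ell(S^-)$ must remain i.i.d.\ uniform, which requires that the choice of which copy to delete be rank-independent. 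The rank-shifting reduction for deletions is the delicate part. If the union bound over rounds is too weak, I would instead bound $\theta(R)$ using only the rounds where membership changes.

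Given a good $R$, write the distribution of the $S^+$ transcript as a mixture $(1-p)P_R+pP'_R$, where $p=\Pr[r^*<\theta(R)]\le\bar p$ and $P_R$ is the $S^-$ transcript. For each fixed $r^*$, the privacy paragraph above shows that the transcripts are $(\eps_W,\delta_W)$-indistinguishable, since every round's sample differs by at most one swap and advanced composition applies over $J$. Averaging over $r^*<\theta$ gives $P'_R\approx_{(\eps_W,\delta_W)}P_R$. We then apply the standard subsampling-mixture lemma (as in \cite{BBG18}): a mixture $(1-p)P+pP'$ versus $P$ is $(\ln(1+p(e^{\eps_W}-1)),\,p\delta_W)$-indistinguishable in one direction. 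The other direction costs a slightly worse constant, through the usual $e^{\eps}$ inflation from the convexity of the hockey-stick divergence. Using $\eps_W\le1$ and $p\le\bar p\le1/2$, both directions are bounded by $12\bar p\,\eps_W$ and $3\bar p\,\delta_W$. Finally, we integrate over $R$ and add $\beta_r$ for the bad ranks, which yields $(12\bar p\,\eps_W,\,3\bar p\,\delta_W+\beta_r)$-indistinguishability.
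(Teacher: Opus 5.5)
Your outline follows the paper's skeleton: a coupling that isolates one distinguished rank $r^*$, the threshold $\theta(R)$ (the paper's $\tau^*$), and the Chernoff plus union bound showing $\theta(R)\le\bar p$ except with probability $\beta_r$. For an \emph{addition} challenge your final step is sound. There the $S^-$ transcript given $R$ does not involve $r^*$, so $\Pr_{S^+}[\cdot\mid R]=(1-p)\Pr_{S^-}[\cdot\mid R]+pP'_R$ holds, and the add/remove subsampling lemma applies. Even here you must align the coupling with the LIFO deletion rule, putting the unmatched rank on the first-instantiated copy as the paper does. Otherwise a common deletion of $(t^*,x^*)$ removes $r^*$ from $S^+$ but a different rank from $S^-$, and the discrepancy migrates to another rank.

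The genuine gap is the \emph{deletion} challenge. Under the cyclic coupling, the copy carrying $r^*$ is live in \emph{both} executions before the differing deletion. So, conditioned on $R$, the $S^-$ transcript still depends on $r^*$ through whether that copy is sampled at pre-deletion rounds, and this is correlated with the event $\{r^*<\theta(R)\}$. Hence your identity $\Pr_{S^+}[\cdot\mid R]=(1-p)\Pr_{S^-}[\cdot\mid R]+pP'_R$ is false. What holds instead is $\Pr_{S^\pm}[\cdot\mid R]=(1-p)Q_R+pP^{\pm}_R$, where $Q_R$ is the common law given $r^*\ge\theta(R)$ and $P^{\pm}_R$ are the laws given $r^*<\theta(R)$.

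Averaging your fixed-$r^*$ bound only gives $P^+_R\approx_{(\eps_W,\delta_W)}P^-_R$. That yields $\Pr_{S^+}[E\mid R]-\Pr_{S^-}[E\mid R]\le p\big((e^{\eps_W}-1)P^-_R(E)+\delta_W\big)$. Nothing you state bounds $pP^-_R(E)$ by $O(p)\cdot\Pr_{S^-}[E\mid R]$. The trivial bound $pP^-_R(E)\le\Pr_{S^-}[E\mid R]$ loses the amplification entirely.

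The missing idea is the paper's use of inclusion patterns over the copy's whole lifespan $L^*$, not just over the differing window. Since $|L^*|\le W$, the transcripts of the \emph{same} execution at any two values of $r^*$ are $(\eps_W,\delta_W)$-close. This gives $P^-_R(E)\le e^{\eps_W}Q_R(E)+\delta_W\le e^{\eps_W}\Pr_{S^-}[E\mid R]/(1-p)+\delta_W$. Plugging this in yields the difference bound $p\big[(e^{\eps_W}-1)\big(e^{\eps_W}\Pr_{S^-}[E\mid R]/(1-p)+\delta_W\big)+\delta_W\big]$. This is exactly where the constants $12=2\cdot3\cdot2$ and $3$ come from, and the reverse direction is symmetric.
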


\begin{proof}
\emph{Setup.} As in the privacy proof of Theorem~\ref{thm:sampling}, one of the two sequences, denote it $S^+$, holds one extra surviving copy of a stamped point $(t^*,x^*)$ at a contiguous set of rounds, the data agreeing otherwise. Intersecting this one-copy difference with the live window gives a set $W^*$ of at most $W$ rounds at which $\Live_\ell(S^+)$ exceeds $\Live_\ell(S^-)$ by one copy of $x^*$, the two live windows agreeing elsewhere. To obtain amplification over the whole window, we would like the discrepancy between the two executions to be governed by a single random rank throughout $W^*$. The difficulty is that, after common deletions, the copy that is unmatched between the two executions may change. If each such copy carried its own independent rank, different parts of the window could be governed by different random ranks. We therefore couple the ranks so that throughout $W^*$ the live-rank multisets differ by one fixed rank $r^*$, with all remaining ranks matched between the two executions. 

We couple the ranks of the two executions so that throughout $W^*$ the one-rank difference is a fixed rank $r^*$, with all remaining ranks matched between the two executions. For a differing addition this is immediate: we place the added copy first among copies of the same stamped point, so its rank $r^*$ remains the one-rank difference under subsequent common LIFO deletions. For example, if before a common deletion the ranks are
\[
\begin{array}{c@{\qquad}c}
S^- & S^+\\[2mm]
(r_1,r_2,\ldots,r_m)
&
(r^*,r_1,r_2,\ldots,r_m),
\end{array}
\]
then the common LIFO deletion removes $r_m$ from both executions, leaving
\[
\begin{array}{c@{\qquad}c}
S^- & S^+\\[2mm]
(r_1,r_2,\ldots,r_{m-1})
&
(r^*,r_1,r_2,\ldots,r_{m-1}).
\end{array}
\]

Since the ranks are persistent and may already affect the releases before the differing deletion, the coupling must be defined consistently already before that deletion; it is not enough to couple only the states that remain afterwards. For a differing addition, this is easy: the coupling can place the unmatched rank at one end, while subsequent common deletions proceed from the other end. For a differing deletion, however, the differing deletion and the subsequent common deletions act according to the same deletion rule, so under such a simple coupling the unmatched rank may change after a common deletion. 

Thus, for a differing deletion, $S^+$ is the sequence without the extra deletion. Let $m$ be the number of surviving copies of $(t^*,x^*)$ immediately before the extra deletion, indexed by instantiation order. Under $S^-$, we label their ranks $r_1,\dots,r_m$. Under $S^+$, we couple the ranks cyclically, assigning $r_m$ to the first copy and, for every $i\ge2$, $r_{i-1}$ to the $i$th. In both executions we designate $r_m$ as $r^*$. Thus, immediately before the differing deletion, the ranks of these copies are coupled as

$$
\begin{array}{c@{\qquad}c}
S^- & S^+\\[2mm]
(r_1,r_2,\ldots,r_{m-1},r_m)
&
(r_m,r_1,\ldots,r_{m-2},r_{m-1}).
\end{array}
$$

The differing LIFO deletion acts only in $S^-$ and removes its last copy, which carries $r_m=r^*$. Hence, immediately afterwards,

$$
\begin{array}{c@{\qquad}c}
S^- & S^+\\[2mm]
(r_1,r_2,\ldots,r_{m-1})
&
(r^*,r_1,r_2,\ldots,r_{m-1}).
\end{array}
$$

Every subsequent common LIFO deletion, while the one-copy difference persists, removes the same rightmost rank $r_j$ from both executions, so the difference between their live rank multisets remains $r^*$.\footnote{This is a valid coupling because a fixed relabeling of i.i.d.\ ranks preserves each marginal distribution.}
\medskip

\emph{Inclusion patterns.}
The coupling above resolves the difficulty of keeping a single distinguished rank throughout the differing window. A separate issue remains. While in both the addition and deletion cases $r^*$ is uniform and independent of the ranks $\rho$ of all other copies, in the deletion case $r^*$ is present in both executions before the differing deletion, and may therefore already affect the released answers before the two executions diverge. To account for this dependence, we track the inclusion of $r^*$ over its entire lifespan, rather than only during the differing window.

Let $L^*$ be the full \emph{lifespan} of this extra copy under $S^+$ (the set of rounds where it is live). Since items expire after $W$ rounds, $|L^*| \le W$. Note that $L^*$ fully includes $W^*$ (the critical window of rounds where the extra copy is live under $S^+$ but absent under $S^-$). 

To determine if this extra copy enters the sample at a given round $\ell \in L^*$, we compare its rank to those of the other available items. Let $\tau_\ell$ be the $k$-th smallest rank among all live copies \emph{excluding} the extra copy. (Notice that at rounds $\ell \in W^*$, these other copies are exactly the items comprising $\Live_\ell(S^-)$)\footnote{If fewer than $k$ other live copies are present, we set
$\tau_\ell=+\infty$, so that
$I_\ell(r^*)=\mathbf{1}[r^*<\tau_\ell]$ continues to describe the
inclusion decision exactly. This convention can arise only outside
$W^*$, since for $\ell\in W^*$ there are at least $n\ge k$ other live
copies.}. 

Given the fixed ranks of all other items (denoted $\rho$), the threshold $\tau_\ell$ is a fixed value. The extra copy enters the sample at round $\ell$ if and only if its rank beats the threshold: $I_\ell(r^*) = \mathbf{1}[r^* < \tau_\ell]$. Consequently, the entire transcript of answers under $S^+$ depends on $r^*$ exclusively through the binary vector $J(r^*) = (I_\ell(r^*))_{\ell \in L^*}$, which we call the \emph{inclusion pattern}. Let $\Pi(\cdot \mid J)$ denote the conditional distribution of the transcript given $\rho$ and a specific pattern $J$. 

Under the neighboring sequence $S^-$, the extra copy is completely absent during $W^*$. Therefore, its effective inclusion pattern, $J^-(r^*)$, is forced to zero across $W^*$, but matches $J(r^*)$ everywhere else in $L^*$. The transcript under $S^-$ thus follows the conditional distribution $\Pi(\cdot \mid J^-(r^*))$.

As a general property, if \emph{any} two arbitrary patterns differ at exactly $h$ rounds, their corresponding samples differ by one item substitution (one removal and one insertion) at each of those $h$ rounds. Because the base mechanism $M$ uses fresh coins each round, we can compose the privacy loss exactly as in Theorem~\ref{thm:sampling}, meaning these sample differences result in conditional distributions that are $(2h\eps_0,\,2he^{\eps_0}\delta_0)$-indistinguishable. Since the lifespan $|L^*| \le W$, \emph{any} two arbitrary patterns over $L^*$ have $(\eps_W,\delta_W)$-indistinguishable conditional distributions via advanced composition.

\medskip

\emph{Amplification.} We now bound how much the probability of any arbitrary output event $E$ changes between the two universes. Fix $\rho$, and let $\tau^* = \max_{\ell\in W^*}\tau_\ell$ be the threshold the extra item must beat to enter the sample at least once during the critical window. Since $r^*$ is drawn uniformly from $[0,1]$, let $p = \tau^*$ be the probability that the random rank $r^*$ successfully beats this threshold (i.e., $r^* < \tau^*$).

\textbf{Case 1 ($r^* \ge \tau^*$):} The extra item's rank is too high, meaning it never enters the sample during $W^*$ under $S^+$. Because its pattern was already forced to zero during $W^*$ under $S^-$, the two inclusion patterns are perfectly identical ($J(r^*) = J^-(r^*)$), and the conditional transcripts coincide perfectly. Thus, $\Pi(E\mid J(r^*)) - \Pi(E\mid J^-(r^*)) = 0$.

\textbf{Case 2 ($r^* < \tau^*$):} The extra item enters the sample at least once during $W^*$ under $S^+$. In this case, applying our general pattern bound gives $\Pi(E\mid J(r^*))\le e^{\eps_W}\Pi(E\mid J^-(r^*))+\delta_W$. 

To relate this back to the overall probability of $E$ under $S^-$, we must introduce a hypothetical "bad" rank $\tilde r \ge \tau^*$. Under $S^-$, the inclusion patterns $J^-(r^*)$ and $J^-(\tilde r)$ might differ (for example, at rounds outside of $W^*$, since the two rank values may induce different inclusion decisions prior to the differing deletion). Fortunately, because \emph{any} two patterns over $L^*$ are $(\eps_W, \delta_W)$-indistinguishable, we can safely bound $\Pi(E\mid J^-(r^*))\le e^{\eps_W}\Pi(E\mid J^-(\tilde r))+\delta_W$.

We choose $\tilde r \ge \tau^*$ to minimize this latter probability. 
Because the minimum over $\tilde r\ge\tau^*$ is always less than or equal to the conditional average over the same range, it is bounded by the conditional average of $\Pi(E\mid J^-(\tilde r))$ over all $\tilde r \ge \tau^*$. This conditional average is at most $\frac{\Pr_{S^-}[E\mid\rho]}{1-p}$. Substituting this bound back into our bound gives:
\begin{equation}\label{eq:avg-bound}
\Pi(E\mid J^-(r^*)) \le e^{\eps_W}\left(\frac{\Pr_{S^-}[E\mid\rho]}{1-p}\right)+\delta_W.
\end{equation}

Next, we rearrange our initial Case 2 bound to isolate the difference between the two universes:
\begin{equation}\label{eq:diff-bound}
\Pi(E\mid J(r^*)) - \Pi(E\mid J^-(r^*)) \le (e^{\eps_W}-1)\Pi(E\mid J^-(r^*)) + \delta_W
\end{equation}

By plugging \eqref{eq:avg-bound} into the right side of \eqref{eq:diff-bound}, we obtain a worst-case bound for the difference when $r^* < \tau^*$. Because this difference is exactly zero in Case 1, taking the expectation over all possible values of $r^*$ simply multiplies our worst-case difference by $p$ (the probability of Case 2). Therefore:
\begin{align*}
\Pr_{S^+}[E\mid\rho]-\Pr_{S^-}[E\mid\rho]
&\;=\;\mathbb{E}_{r^*}\big[\mathbf 1_{\{r^* < \tau^*\}}\big(\Pi(E\mid J(r^*))-\Pi(E\mid J^-(r^*))\big)\big]\\
&\;\le\;p\Big[(e^{\eps_W}-1)\Big(\frac{e^{\eps_W}\Pr_{S^-}[E\mid\rho]}{1-p}+\delta_W\Big)+\delta_W\Big].
\end{align*}

We next show that $p\le \bar p$ with high probability. Since $p=\max_{\ell\in W^*}\tau_\ell$, it suffices to bound $\tau_\ell$ for each $\ell\in W^*$. Fix such a round $\ell$. Since $\tau_\ell$ is the $k$th smallest rank among the live copies other than the extra copy, the event $\tau_\ell>\bar p$ occurs exactly when fewer than $k$ of these ranks fall below $\bar p$ (less than $k$ successes). At rounds $\ell\in W^*$ these copies are precisely those in $\Live_\ell(S^-)$, and by assumption $|\Live_\ell(S^-)|\ge n$. Therefore, $$\Pr_\rho[\tau_\ell>\bar p]
=
\Pr\!\left[\mathrm{Bin}(|\Live_\ell(S^-)|,\bar p)<k\right]
\le
\Pr[\mathrm{Bin}(n,\bar p)<k].
$$

The assumption $\bar p \le 1/2$ in the theorem ensures that $n\bar p = 2k+8\ln(W/\beta_r)$. In particular, $n\bar p\ge 2k$ and
$n\bar p\ge 8\ln(W/\beta_r)$. Hence, a Chernoff bound gives

$$
\Pr_\rho[\tau_\ell>\bar p]
\le
e^{-n\bar p/8}
\le
\frac{\beta_r}{W}.
$$
Taking a union bound over the at most $W$ rounds in $W^*$ gives

$$
\Pr_\rho[p>\bar p]
=
\Pr_\rho\!\left[\max_{\ell\in W^*}\tau_\ell>\bar p\right]
\le
\beta_r.
$$

Let $G=\{p\le\bar p\}$. On $G$, we have $p\le\bar p\le1/2$. Therefore, using $\eps_W\le1$, $e^{\eps_W}-1\le2\eps_W$, $e^{\eps_W}\le3$, and $1/(1-p)\le2$ in the preceding bound, we obtain

$$
\Pr_{S^+}[E\mid\rho]-\Pr_{S^-}[E\mid\rho]
\le
12p\,\eps_W\Pr_{S^-}[E\mid\rho]+3p\,\delta_W
\le
12\bar p\,\eps_W\Pr_{S^-}[E\mid\rho]+3\bar p\,\delta_W.
$$
Thus, using $1+x\le e^x$,

$$
\Pr_{S^+}[E\mid\rho]
\le
e^{12\bar p\eps_W}\Pr_{S^-}[E\mid\rho]
+
3\bar p\,\delta_W,
$$
Finally, averaging over $\rho$ and using $\Pr[G^c]\le\beta_r$ gives

$$
\Pr_{S^+}[E]
\le
e^{12\bar p\eps_W}\Pr_{S^-}[E]
+
3\bar p\,\delta_W+\beta_r,
$$
The reverse direction follows by the same argument after interchanging $S^+$ and $S^-$.
\end{proof}

\section{Marginal worst case negative result for CountDistinct}\label{app:reconstruction}

This appendix proves the marginal worst-case lower bound for CountDistinct stated in Theorem~\ref{thm:marginal-lb}, where ``marginal'' means that accuracy is required separately at each time step, rather than simultaneously over the entire output sequence; thus, the lower bound already holds under this weaker accuracy requirement. We first recall a foundational result of Dwork, McSherry, and Talwar \cite{DMT07} on robust decoding, in the line of reconstruction attacks initiated by Dinur and Nissim \cite{DN03}.

\paragraph{Background: Robust LP Decoding.}
Suppose we have a secret dataset $y\in\{0,1\}^n$.
The results of Dwork, McSherry, and Talwar~\cite[Theorem 23]{DMT07} imply that
there exists a universal constant $\rho^*>0$ such that, for every constant
$\gamma<\rho^*$, there exist universal constants $c_1\geq 1$ and $c_2>0$
such that the following holds. For every coefficient vector $s\in\{-1,1\}^n$, define the linear query

$$
Q_s(y):=\langle s,y\rangle=\sum_{i=1}^n s_i y_i.
$$

Let $s^{(1)},\ldots,s^{(m)}\in\{-1,1\}^n$, where $m=c_1 n$, be
sampled independently and uniformly at random. The corresponding
linear queries are $Q_{s^{(1)}},\ldots,Q_{s^{(m)}}$, and their possibly noisy answers $z_1,\ldots,z_m$ form the
answer vector $z=(z_1,\ldots,z_m)\in\mathbb{R}^m$, where $z_r$
corresponds to the query $Q_{s^{(r)}}$. With probability $1-\exp(-\Omega(n))$ over the sampled query set,
the following guarantee holds simultaneously for every
$y\in\{0,1\}^n$: given the sampled queries and any answer vector $z$
for which at least a $(1-\gamma)$ fraction of its coordinates satisfy
$\left|z_r-Q_{s^{(r)}}(y)\right|\leq \alpha$, the linear programming decoder of~\cite{DMT07}, followed by
coordinate-wise rounding, reconstructs a database that differs from
$y$ in at most $\max\{C,(C'\alpha)^2\}$ coordinates, for universal
constants $C,C'>0$. In particular, by taking
$\alpha=c_2\sqrt n$ for a sufficiently small universal constant
$c_2>0$, and fixing $\gamma>0$ below the constant corruption threshold
of~\cite{DMT07}, for all sufficiently large $n$ the decoder recovers all but $n/100$ of the bits of $y$.

Using this background, we state and prove the lemma.

\begin{lemma}[Reconstruction from marginally accurate answers]\label{lem:reconstruction}
There are constants $c_1\ge1$ and $c_2,\beta^*,\delta^*>0$ such that for all sufficiently large $n$ there exist queries $s^{(1)},\dots,s^{(k)}\in\{-1,1\}^n$ with $k=c_1\cdot n$ for which no randomized algorithm $B$ mapping $y\in\{0,1\}^n$ to $(b_1,\dots,b_k)\in\mathbb{R}^k$ can satisfy both:
\begin{enumerate}
\item[(i)] for every $y$ and every $j$, $\Pr[|b_j-\langle s^{(j)},y\rangle|\le c_2\sqrt n]\ge1-2\beta^*$; and
\item[(ii)] $B$ is $(1,\delta^*)$-differentially private with respect to changing one coordinate of $y$.
\end{enumerate}
\end{lemma}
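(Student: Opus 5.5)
We argue by contradiction, using the robust LP decoding background recalled above. Fix the constants $\rho^*$, $c_1$, $c_2$ from the background, choose $\gamma<\rho^*$, and set $\beta^*=\gamma/4$. The queries $s^{(1)},\dots,s^{(k)}$ are taken from the good event of the background, which has probability $1-\exp(-\Omega(n))>0$. On that event, for every $y$, any answer vector that is $c_2\sqrt n$-accurate on at least a $(1-\gamma)$ fraction of coordinates decodes to a $y'$ with $\|y'-y\|_1\le n/100$. The queries are now fixed, and we suppose $B$ satisfies both (i) and (ii).

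\textbf{Step 1: from marginal to simultaneous accuracy.} Let $X$ be the number of coordinates $j$ with $|b_j-\langle s^{(j)},y\rangle|>c_2\sqrt n$. By (i), $\mathbb{E}[X]\le 2\beta^* k$, so Markov's inequality gives
\[
\Pr[X>\gamma k]\le 2\beta^*/\gamma=1/2 .
\]
Hence for every $y$, with probability at least $1/2$ the vector $b$ is good enough for the decoder. In that case the reconstruction $\hat y=\mathrm{Dec}(b)$ satisfies $\|\hat y-y\|_1\le n/100$. This Markov averaging step is exactly where marginal accuracy suffices: the decoder tolerates a constant fraction of arbitrary corruptions.

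\textbf{Step 2: contradiction with privacy.} Draw $y$ uniformly from $\{0,1\}^n$ and $i$ uniformly from $[n]$. The map $y\mapsto\hat y_i$ is post-processing of $B$, so by (ii) it is $(1,\delta^*)$-DP with respect to changing $y_i$. Let $y^{(i)}$ denote $y$ with coordinate $i$ flipped. Pairing $y$ with $y^{(i)}$ gives
\[
\Pr[\hat y_i(y)=y_i]\le e\,\Pr[\hat y_i(y^{(i)})=y_i]+\delta^* .
\]
Averaging over $y$ and $i$, and using that $y^{(i)}$ is also uniform, we get $p\le e(1-p)+\delta^*$, where $p=\Pr[\hat y_i=y_i]$. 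Thus
\[
p\le \frac{e+\delta^*}{1+e}\approx 0.74 .
\]
On the other hand, Step 1 gives
\[
p\ge \tfrac12\cdot\tfrac{99}{100}+\tfrac12\cdot 0 \approx 0.495 ,
\]
which does not yet contradict the upper bound.

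\textbf{Main obstacle: amplifying the success probability.} The difficulty is turning the probability-$1/2$ guarantee into a contradiction. We try two fixes.

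First, choose $\beta^*$ much smaller relative to $\gamma$, say $\beta^*=\gamma/200$. Then Markov gives $\Pr[X>\gamma k]\le 1/100$, and so
\[
p\ge 0.99\cdot 0.99>0.98 .
\]
This exceeds $(e+\delta^*)/(1+e)$ whenever $\delta^*\le 0.1$, which is a contradiction.

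Alternatively, if constants must be tight, we would handle the failure event by also accounting for the $\hat y$ produced there. This is not needed, because $\beta^*$ and $\delta^*$ are ours to choose. We therefore fix $\beta^*=\gamma/200$, $\delta^*=0.1$, and $c_2$ as in the background, and require $n$ large enough that the decoder guarantee holds.
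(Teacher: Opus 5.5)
Your proposal is correct and follows essentially the same route as the paper. Markov's inequality turns per-coordinate accuracy into a constant-probability event that at least a $(1-\gamma)$ fraction of the answers are $c_2\sqrt n$-accurate; the robust LP decoder then recovers all but $n/100$ of the bits; and post-processing caps the probability of guessing a single bit at $(e+\delta^*)/(e+1)$. The differences are cosmetic: you take $\beta^*=\gamma/200$ for a wider margin (the paper uses $10\beta^*<\gamma$), and you get the guessing bound from one symmetrized inequality over $y$ and $y^{(i)}$ rather than the paper's two inequalities conditioned on $y_{-i}$. The false start with $\beta^*=\gamma/4$ can simply be deleted.
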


\begin{proof}
Set $\delta^*=0.1$, and choose $\beta^*>0$ sufficiently small so that $10\beta^*<\gamma$. Suppose $B$ satisfies both conditions. We fix $s^{(1)},\ldots,s^{(k)}$ to be a sign-query set satisfying the robust reconstruction guarantee established above. Since the random construction above produces such a set with high probability, such a fixed set exists. To derive a contradiction, consider the experiment in which the secret dataset $y$ is sampled uniformly from $\{0,1\}^n$, independently of the internal randomness of $B$, and $B$ is run on this input.

Run the decoder described above on the answers $b_1,\ldots,b_k$ produced by $B$ to obtain a reconstructed dataset $\hat y\in\{0,1\}^n$. By condition (i), the probability that any single answer $b_j$ is corrupted (i.e., has error $> c_2\sqrt{n}$) is at most $2\beta^*$. The corruption events across different answers need not be independent, so the number of corrupted answers is not necessarily binomial. Nevertheless, by linearity of expectation, the expected fraction of corrupted answers, $\mathbb{E}[\#{\text{corrupted answers}}/k]$, is at most $2\beta^*$. By Markov's inequality, the probability that the actual fraction of corrupted answers exceeds $10\beta^*$ is at most $(2\beta^*) / (10\beta^*) = 1/5$. Therefore, with probability at least $4/5$, all but a $10\beta^*$ fraction of the answers are within $c_2\sqrt n$ of the truth.

By our choice of $\beta^*$, this $10\beta^*$ corrupted fraction is below the tolerance threshold $\gamma$ established above. Conditioned on the corrupted fraction being below the threshold, the decoder errs on at most $n/100$ coordinates. Hence, in this experiment, the expected fraction of correctly guessed bits is bounded from below by the probability that the decoder succeeds multiplied by its accuracy when it does succeed:

$$
\mathbb{E}\left[\frac1n|\{i:\hat y_i=y_i\}|\right]\;\ge\;\frac45\cdot\frac{99}{100}\;=\;0.792.
$$

On the other hand, let $D$ denote the decoder. Since
$\hat y=D(B(y))$ is obtained by post-processing the output of $B$,
the mapping $y\mapsto\hat y$ is also $(1,\delta^*)$-differentially
private. Therefore, differential privacy imposes an upper bound on
the probability that the reconstructed bit $\hat y_i$ correctly
guesses the original bit $y_i$. Fix an index $i$ and condition on
the other coordinates $y_{-i}$. The two values of
$y_i\in\{0,1\}$ give neighboring inputs, so, writing
$p_b=\Pr[\hat y_i=1\mid y_i=b,y_{-i}]$ and
$1-p_b=\Pr[\hat y_i=0\mid y_i=b,y_{-i}]$, the privacy
guarantee applied to the events $\{\hat y_i=1\}$ and $\{\hat y_i=0\}$
in the corresponding directions gives
$p_1\le e\cdot p_0+\delta^*$ and
$1-p_0\le e\cdot(1-p_1)+\delta^*$. Because $y$ is uniformly distributed over $\{0,1\}^n$, the bit $y_i$
is uniformly distributed even after conditioning on $y_{-i}$, and the probability of correctly guessing $y_i$ is

$$
\Pr[\hat y_i=y_i\mid y_{-i}]
=
\frac12\Pr[\hat y_i=1\mid y_i=1,y_{-i}]
+
\frac12\Pr[\hat y_i=0\mid y_i=0,y_{-i}]
=
\frac12(p_1+1-p_0).
$$

Manipulating the privacy bounds, we can upper bound $p_1 - p_0$:
\begin{align*}
p_1 + 1 - p_0 &\le e\cdot p_0 + e(1-p_1) + 2\delta^* \\
(p_1 - p_0) + 1 &\le e(1 - (p_1 - p_0)) + 2\delta^* \\
(p_1 - p_0)(e+1) &\le e - 1 + 2\delta^* \\
p_1 - p_0 &\le \frac{e-1}{e+1} + \frac{2\delta^*}{e+1}.
\end{align*}
Plugging this into our success probability yields:

$$
\Pr[\hat y_i=y_i]\;=\;\frac12(p_1-p_0+1)\;\le\;\frac12\left(\frac{e-1}{e+1} + \frac{2\delta^*}{e+1} + 1\right) \;=\; \frac{e}{e+1} + \frac{\delta^*}{e+1}.
$$

Since $\frac{e}{e+1} \approx 0.73105 \le 0.732$ and $\frac{1}{e+1} \le \frac{1}{2}$, we obtain $\Pr[\hat y_i=y_i] \le 0.732 + \frac{\delta^*}{2}$. Averaging over all coordinates $i$, the expected fraction of correctly guessed bits permitted by differential privacy is at most $0.732 + \delta^*/2$.

However, since $\delta^*=0.1$, the maximum accuracy permitted by differential privacy ($< 0.792$) strictly contradicts the accuracy achieved by the LP decoder ($\ge 0.792$). This contradiction proves that no such algorithm $B$ can exist.
\end{proof}

We now prove the marginal worst case bound.

\begingroup
\def\thetheorem{\ref{thm:marginal-lb}}
\begin{theorem}[Marginal worst case bound, restated]
There are constants $c,\beta^*,\delta^*>0$ such that for all sufficiently large $T$, no dynamic algorithm for $\mathrm{CountDistinct}$ that is $(1,\delta^*)$-private is $(c\cdot T^{1/4},\beta^*)$-accurate per round on the class of all input sequences.
\end{theorem}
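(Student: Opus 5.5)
We will reduce to Lemma~\ref{lem:reconstruction}: from a hypothetical algorithm $\A$ for $\mathrm{CountDistinct}$ that is $(1,\delta^*)$-private and $(c\,T^{1/4},\beta^*)$-accurate per round on all sequences, we will build a randomized $B$ mapping $y\in\{0,1\}^n$ to answers $(b_1,\dots,b_k)$ with $k=c_1 n$. We take $n=\Theta(T^{1/2})$, chosen so that the total stream length below is at most $T$. Then $c\,T^{1/4}$ becomes $O(c\sqrt n)$, and choosing $c$ small enough makes it at most $c_2\sqrt n/2$. The construction follows the Jain et al.\ embedding of sign queries into distinct-count differences.

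\textbf{The stream.} Use universe elements $e_1,\dots,e_n$, together with fresh elements where needed.
\begin{enumerate}
\item In an initial phase of $n$ rounds, insert $e_i$ at round $i$ iff $y_i=1$; otherwise leave the round empty.
\item Then process the $k$ queries one at a time. For query $s^{(j)}$, spend one block of $O(n)$ rounds: delete every element $e_i$ with $s^{(j)}_i=-1$ and $y_i=1$, record the answer at the end of the block, then re-insert those elements. Deletions name the stamp of the current copy.
\end{enumerate}

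At the end of block $j$, the true count is $C_j=\sum_{i:s_i=+1}y_i$. Writing $P_j=\{i:s^{(j)}_i=+1\}$, we have $\langle s^{(j)},y\rangle=2C_j-|y|$. The total $|y|$ is read from the answer after the initial phase. The length is $O(kn)=O(n^2)=O(T)$.

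\textbf{Accuracy.} Each answer used has per-round error at most $c\,T^{1/4}$ with probability $1-\beta^*$. By a union bound over the two rounds feeding $b_j=2a_{\text{end}(j)}-a_{\text{init}}$, each $b_j$ has error at most $3c\,T^{1/4}\le c_2\sqrt n$ with probability at least $1-2\beta^*$. This gives condition (i) of Lemma~\ref{lem:reconstruction}, and it uses exactly the per-round marginal form of accuracy.

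\textbf{Privacy, the main obstacle.} Flipping a single $y_i$ changes the stream by one insertion plus up to $2k$ deletion/reinsertion operations, so event-level privacy only gives group privacy over $\Theta(n)$ operations, which is useless. The fix, as in Jain et al., is to avoid operations that depend on $y$ beyond the initial insertion. In block $j$, issue deletions $(\del,t_i,e_i)$ for \emph{all} $i$ with $s_i=-1$, irrespective of $y_i$; a deletion naming an absent copy is a no-op by the model's totality convention, so the deletions themselves carry no information about $y$.

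Re-insertion is the delicate part, since it creates a new stamp that later deletions must name. I would instead re-insert fresh copies at known rounds and have the next block's deletions name those data-independent stamps. The remaining problem is that re-insertion must happen only if $y_i=1$. To handle this, I would use a separate element $e_i^{(j)}$ per block in place of re-insertion: insert $e_i^{(j)}$ and delete it, each conditioned on $y_i$. Conditioning on $y_i$ is exactly the issue, however. So I expect to adopt the Jain et al.\ construction directly, where the only $y$-dependent operation per coordinate is one insertion and the query structure is realized via data-independent deletions and insertions of elements whose presence is gated by $y_i$ through stamps.

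Once neighboring $y$'s yield neighboring streams in the sense of Definition~\ref{def:neighboring}, or streams at constant distance handled by group privacy with adjusted constants, $B$ is $(1,\delta^*)$-DP by post-processing. Lemma~\ref{lem:reconstruction} then gives the contradiction.
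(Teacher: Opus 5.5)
Your reduction to Lemma~\ref{lem:reconstruction}, the choice $n=\Theta(\sqrt T)$, and the accuracy bookkeeping (two per-round marginals and a union bound, giving error $3\alpha$ with probability $1-2\beta^*$) all match the paper. The privacy step, which you correctly identify as the crux, is never resolved, and the deletion-based encoding you start from cannot be repaired along the lines you sketch. Making the block-$j$ deletions $(\del,i,e_i)$ unconditional works for one query only. After that deletion the current data of the two worlds $y_i=0$ and $y_i=1$ coincide on value $i$, so no later true count carries information about $y_i$ unless some later operation depends on $y_i$. Such an operation could be a re-insertion gated by $y_i$, or a per-block copy $e_i^{(j)}$ gated by $y_i$. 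But every such operation is another differing operation, which brings you back to group privacy over $\Theta(k)=\Theta(n)$ operations. Deferring to ``the Jain et al.\ construction, with presence gated by $y_i$ through stamps'' is a placeholder, not an argument.

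The missing idea is to encode each query through data-independent \emph{insertions} of the public set, exploiting that $\mathrm{CD}$ ignores duplicates. In block $j$, add one copy of every value in $Q_j=\{i:s^{(j)}_i=+1\}$. At the end of this insertion phase the true count is $|Y\cup Q_j|=|Y|+|Q_j|-|Y\cap Q_j|$. Hence $\langle s^{(j)},y\rangle=2|Y\cap Q_j|-|Y|=\mathrm{CD}(\Data_n)+2|Q_j|-2\,\mathrm{CD}(\Data_{m_j})$, and the estimator is $b_j=a_n+2|Q_j|-2a_{m_j}$. Then delete exactly the copies added in the block, naming their known in-block stamps. This returns the state to $Y$ whatever $y$ is, and never touches the initial copy at stamp $i$.

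Every block operation is then independent of $y$, so flipping $y_i$ changes only the single addition at round $i$. The two streams are neighboring in the sense of Definition~\ref{def:neighboring}, with no group privacy needed. This matters because Lemma~\ref{lem:reconstruction} is stated for exactly $(1,\delta^*)$-privacy.

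A minor slip: since $b_j$ has error up to $3cT^{1/4}$, you need $cT^{1/4}\le c_2\sqrt n/3$, not $c_2\sqrt n/2$.
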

\addtocounter{theorem}{-1}
\endgroup

\begin{proof}
Suppose $\A$ is $(1,\delta^*)$-private and
$(\alpha,\beta^*)$-accurate per round on all sequences. Recall that
$\mathrm{CD}(D):=|\{x\in\X:\exists t\text{ such that }(t,x)\in D\}|$
denotes the exact CountDistinct value on a dataset $D$, whereas
$a_\ell$ denotes the possibly noisy answer released by $\A$ at round
$\ell$. Let $n$ be a
parameter, let $k=c_1n$, and fix the queries
$s^{(1)},\dots,s^{(k)}$ of Lemma~\ref{lem:reconstruction}. For
$y\in\{0,1\}^n$, let

$$
Y:=\{i\in[n]:y_i=1\},
\qquad
Q_j:=\{i\in[n]:s_i^{(j)}=1\}
\quad\text{for every }j\in[k].
$$

Define the sequence $S(y)$ over universe $[n]$ as follows. At round
$i\in[n]$, add value $i$ if $y_i=1$, and otherwise leave the round
empty. Then, for every $j=1,\dots,k$, use a block of $2n$ rounds.
During the first $n$ rounds of the block, add one copy of every value
in $Q_j$, using empty rounds as needed. During the next $n$ rounds,
delete exactly those copies, again using empty rounds as needed, with
each deletion naming the stamp of the copy added in that block. Pad with empty rounds to horizon $T$. The construction fits whenever $(2k+1)n\leq T$, so we may take $n=\Theta(\sqrt{T})$.

Two observations about this family are important. First, at round $n$, the set of values present is exactly $Y$, and hence 
$$\mathrm{CD}(\Data_n)=|Y|=\|y\|_0.$$
Let $m_j$ denote the last round of the first half of block $j$, after all additions of that block have been processed and before any of its deletions occur. At round $m_j$, the set of values present is exactly $Y\cup Q_j$. Therefore,

$$
\mathrm{CD}(\Data_{m_j})=|Y\cup Q_j|.
$$

Moreover,

$$
\begin{aligned}
\langle s^{(j)},y\rangle
&=
|Y\cap Q_j|-|Y\setminus Q_j|\\
&=
2|Y\cap Q_j|-|Y|\\
&=
|Y|+2|Q_j|-2|Y\cup Q_j|\\
&=
\mathrm{CD}(\Data_n)+2|Q_j|
-2\mathrm{CD}(\Data_{m_j}).
\end{aligned}
$$

Consequently, define $b_j:=a_n+2|Q_j|-2a_{m_j}$. The difference between $b_j$ and the true answer to the query is

$$
\begin{aligned}
b_j-\langle s^{(j)},y\rangle
&=
\bigl(a_n-\mathrm{CD}(\Data_n)\bigr)
-
2\bigl(a_{m_j}-\mathrm{CD}(\Data_{m_j})\bigr).
\end{aligned}
$$

By the per-round accuracy of $\A$, each of the two errors on the
right-hand side has absolute value at most $\alpha$, except with
probability $\beta^*$. Therefore, by a union bound, with probability
at least $1-2\beta^*$ both bounds hold, and in this event

$$
\left|b_j-\langle s^{(j)},y\rangle\right|
\leq
\left|a_n-\mathrm{CD}(\Data_n)\right|
+
2\left|a_{m_j}-\mathrm{CD}(\Data_{m_j})\right| \leq 3\alpha.
$$

Second, changing one coordinate $y_i$ changes only the update at
round $i$: one sequence contains the addition of value $i$, whereas
the other leaves that round empty. All updates in the query blocks
are identical in the two sequences, and every block deletion names
only a copy added within that same block. Hence, the two sequences
differ by exactly one addition and are neighboring according to
Definition~\ref{def:neighboring}. Since
$B(y)=(b_1,\dots,b_k)$ is obtained by post-processing the transcript
of $\A$ on $S(y)$, the mapping $B$ is
$(1,\delta^*)$-differentially private.

The two observations show that $B(y)=(b_1,\dots,b_k)$ is
$(1,\delta^*)$-differentially private and that, for every $y$ and every
$j$,

$$
\Pr\left[
\left|b_j-\langle s^{(j)},y\rangle\right|\leq 3\alpha
\right]\geq 1-2\beta^*.
$$

If $3\alpha\leq c_2\sqrt{n}$, then $B$ satisfies both conditions of
Lemma~\ref{lem:reconstruction}, which is impossible. Hence,

$$
\alpha>\frac{c_2}{3}\sqrt{n}=\Omega(T^{1/4}),
$$

where the last equality follows from $n=\Theta(\sqrt{T})$. Choosing the
constant $c>0$ in the theorem statement sufficiently small gives the
claimed result.
\end{proof}

\end{document}